\setlist{nolistsep}
\newcommand{\NP}{{\sf NP}}
\newcommand{\PSPACE}{{\sf PSPACE}}
\DeclareMathOperator{\dgr}{deg}
\date{}
\author{Carl Feghali, Matthew Johnson, Dani\"el Paulusma\thanks{Author supported by EPSRC (EP/K025090/1).}}
\institute{
School of Engineering and  Computing Sciences, Durham University,\\
Science Laboratories, South Road,
Durham DH1 3LE, United Kingdom
\texttt{\{carl.feghali,matthew.johnson2,daniel.paulusma\}@durham.ac.uk}
}
\title{A Reconfigurations Analogue \\ of Brooks' Theorem and its Consequences\thanks{An extended abstract of this paper appeared in the proceedings of MFCS 2014~\cite{DBLP:conf/mfcs/FeghaliJP14}.}}
\begin{document}
\maketitle

\begin{abstract}
Let $G$ be a simple undirected graph on $n$ vertices with maximum degree~$\Delta$.  Brooks' Theorem states that $G$ has a $\Delta$-colouring unless~$G$ is a complete graph, or a cycle with an odd number of vertices.  To recolour $G$ is to obtain a new proper colouring by changing the colour of one vertex.  We show an analogue of Brooks' Theorem by proving
that  from any $k$-colouring, $k>\Delta$, a $\Delta$-colouring of $G$ can be obtained by a sequence of $O(n^2)$ recolourings using only the original $k$ colours unless
\begin{itemize}
\item $G$ is a complete graph or a cycle with an odd number of vertices, or
\item $k=\Delta+1$, $G$ is $\Delta$-regular and, for each vertex $v$ in $G$, no two neighbours of $v$ are coloured alike.
\end{itemize}
We use this result to study the reconfiguration graph $R_k(G)$ of the  $k$-colourings of  $G$.  The vertex set of $R_k(G)$ is the set of all possible $k$-colourings of $G$ and two colourings are adjacent if they differ on exactly one vertex.
We prove that for $\Delta\geq 3$, $R_{\Delta+1}(G)$ consists of isolated vertices and at most one further component which has diameter $O(n^2)$.
This result enables us to complete both a structural classification and an algorithmic classification for reconfigurations of colourings of graphs of bounded maximum degree.
\end{abstract}

\section{Introduction}

Let $G=(V, E)$ denote a simple undirected graph and let $k$ be a positive integer.
A \emph{$k$-colouring} of $G$ is a function $\gamma: V \rightarrow \{1, 2, \ldots, k\}$ such that if $uv \in E$, $\gamma(u) \neq \gamma(v)$. 
The {\it degree} $\dgr(v)$ of a vertex $v\in V$ is the number of edges incident with $v$, 
or equivalently how many neighbours it has in $G$; we let $\Delta$ denote the maximum degree of $G$. The celebrated theorem of Brooks \cite{brooks} states that a graph $G$ has a $\Delta$-colouring unless $G$ is the complete graph on $\Delta+1$ vertices or a cycle with an odd number $n$ of vertices. Our goal is to translate Brooks' theorem to the setting of 
\emph{reconfiguration graphs}. 

Given a search problem one can define a corresponding reconfiguration graph as follows: vertices correspond to solutions and edges join solutions that are, in some sense, ``close'' to one another.
As this definition suggests, for a given search problem there might be more than one way to define an edge relation of the reconfiguration graph. Reconfiguration graphs have not only been studied for colouring but also for many other problems including boolean satisfiability~\cite{GKMP09,MTY10,Schw14},
clique and vertex cover~\cite{IDHPSUU10}, independent set~\cite{Bo14,BKW14,KMM12},  list edge colouring~\cite{IKD09,IKZ11}, {$L(2,1)$-labeling}~\cite{IKOZ12}, {shortest path}~\cite{Bo10,Bo13}, and {subset sum}~\cite{ID11}; see also a recent survey~\cite{He13}. 
Typical questions are: is the reconfiguration graph connected; if so what is its diameter; if not what is the diameter of its 
(connected) components; and how difficult is it to decide whether there is a path between a pair of given solutions? 
Recent work has included looking at finding the shortest path in the reconfiguration graph between given solutions~\cite{KMM11}, and studying the fixed-parameter-tractability of this problem~\cite{BM14, JKKPP14, MNR14, MNRSS13}.

For the colouring problem, the following definition of the reconfiguration graph is the
most natural. The \emph{$k$-colouring reconfiguration graph} of  $G$, denoted $R_k(G)$, has as its vertex set all possible $k$-colourings of $G$, and two $k$-colourings $\gamma_1$ and $\gamma_2$ are joined by an edge if, for some vertex $u \in V$, $\gamma_1(u) \neq \gamma_2(u)$, and, for all $v \in V \setminus \{u\}$, $\gamma_1(v) = \gamma_2(v)$; that is, if $\gamma_1$ and $\gamma_2$ \emph{disagree} on exactly one vertex.    

As mentioned, besides determining a bound on the diameter of the reconfiguration graph or of its components,
another common aim in this area is to find a path between a
given pair of colourings $\alpha$ and $\beta$ in a reconfiguration graph.  
This leads to the following decision problem (where $k$ denotes a fixed integer, that is, $k$ is not part of the input):

 \medskip
\noindent
\textsc{$k$-Colour Path}\\[2pt]
\begin{tabular}{p{1.7cm}p{10cm}}
\textit{Instance}\,:& A graph $G=(V,E)$ and two 
$k$-colourings $\alpha$ and $\beta$.\\
\textit{Question}\,:&Is there a path in $R_k(G)$ between $\alpha$ and 
$\beta$? 
\end{tabular}

\medskip
\noindent
Note that an equivalent formulation of this problem is whether there exists a sequence of colourings $\gamma_0, \gamma_1, \ldots, \gamma_t$ with $\alpha=\gamma_0$, $\beta=\gamma_t$ such that adjacent colourings disagree on a single vertex.  
We call this a \emph{recolouring} sequence.  If, for $1 \leq i \leq t$, $v_i$ is the vertex on which $\gamma_i$ and $\gamma_{i-1}$ disagree, then we can think of $\beta$ as being obtained from $\alpha$ by recolouring the vertices $v_1, \ldots, v_t$ in order.  Therefore, rather than explicitly considering the reconfiguration graph, one could seek to find a recolouring sequence of $G$; that is, to describe a sequence of vertices and to say which colour each vertex should be recoloured (while avoiding that two adjacent vertices are coloured alike).

\subsection{Existing Results}

The study of reconfiguration graphs of colourings was initiated by Cereda, van den Heuvel and Johnson~\cite{CHJ06, CHJ06a} who proved some initial results on the connectivity of reconfiguration graphs.
The {\sc $k$-Colour Path} problem was shown to be solvable in  time $O(n^2)$ for $k=3$ by Cereceda, van den Heuvel and Johnson~\cite{CHJ06b}; they also proved that the diameter of any component of  the reconfiguration graph~$R_3(G)$ of a 3-colourable graph $G$ is $O(n^2)$.  In contrast, Bonsma and Cerecada~\cite{BC09} proved that this problem
is \PSPACE-complete for all $k\geq 4$ even for bipartite graphs (and for bipartite planar graphs for $4 \leq k \leq 6$), and examples of reconfiguration graphs with components of superpolynomial diameter were given in all these cases.  

Bonamy et al.~\cite{BJLPP14} showed that reconfiguration graphs of $k$-colourings of chordal graphs are connected with diameter $O(n^2)$ whenever~$k$ is more than the size of the largest clique  (and they gave an infinite class of chordal graphs  whose reconfiguration graphs have diameter $\Omega(n^2)$).  Bonamy and Bousquet~\cite{BB13}  generalized this result by showing that if $k$ is at least two greater than the treewidth $tw(G)$ then, again, $R_k(G)$ is connected with diameter $O(n^2)$; note that if $k=tw(G)+1$, then $R_k(G)$ might not be connected since, for example,  $G$ might be a complete graph on $tw(G)+1$ vertices and then $R_k(G)$ contains no edges.  

Bousquet and Perarnau~\cite{BP14} considered sparse graphs. They proved that, for all $d\geq 0$, $k\geq d$ and $\epsilon>0$, the reconfiguration graph $R_k(G)$ of every $(d+1)$-colourable graph $G$ has a polynomial diameter provided that the maximum average degree of $G$ is at most $d-\epsilon$.  We will mention other related results later.

\subsection{Our Results}\label{s-ourresults}

We study reconfigurations of colourings for graphs of bounded maximum degree. Let $K_n$ and $C_n$ denote the complete graph and cycle on $n$ vertices, respectively. Recall that Brooks' Theorem states that every graph $G$ has a $\Delta$-colouring unless $G$ is isomorpic to $K_{\Delta+1}$ or 
$C_n$ for odd $n$. Our first result is an analogue of this theorem for reconfiguration graphs, that is, we answer the question:
given a $k$-colouring~$\gamma$ of $G$, $k \geq \Delta+1$, is there a path from $\gamma$ to a $\Delta$-colouring in~$R_k(G)$? 
(Note that, for any two integers $k$ and $k'$ with $k\geq k'$, every $k'$-colouring of $G$ corresponds to a vertex of $R_k(G)$ since a $k'$-colouring is a $k$-colouring in which not all colours are used.)

In order to state our results we require two definitions.  A $k$-colouring $\gamma$ of a graph is \emph{frozen} if, for every vertex $v$, every colour except $\gamma(v)$ is 
used on the neighbours of $v$.  Notice that a frozen colouring is an isolated vertex in $R_k(G)$.  The length of a shortest path between colourings $\alpha$ and $\beta$ in~$R_k(G)$ is denoted by $d_k(\alpha, \beta)$.  We state our result for connected graphs as disconnected graphs can be considered component-wise.

\begin{theorem} \label{t-delta+1}
Let $G$ be a connected graph on $n$ vertices with maximum degree~$\Delta \geq 1$, and let $k \geq \Delta+1$.    Let $\alpha$ be a $k$-colouring of $G$.  If $\alpha$ is not frozen and $G$ is not $K_{\Delta+1}$ or, if $n$ is odd, $C_n$, then there exists a $\Delta$-colouring $\gamma$ of $G$ such that~$d_k(\alpha, \gamma)$ is $O(n^2)$.  
\end{theorem}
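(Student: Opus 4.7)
The plan is to iteratively eliminate colors: starting from $\alpha$, at each phase I aim to reach a valid $k$-coloring using strictly fewer distinct colors than the previous one, in $O(n)$ recoloring steps. Since $\alpha$ uses at most $n$ distinct colors and the target uses at most $\Delta$, this yields at most $n$ phases and thus $O(n^2)$ total recolorings.

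For the easy range $k \geq \Delta+2$, I would pick any color $c$ used by the current coloring. Each vertex $v$ of color $c$ has at most $\Delta$ neighbors occupying at most $\Delta$ colors, all distinct from $c$, so at least $k-1-\Delta \geq 1$ colors are free at $v$. Since the vertices of color $c$ form an independent set, I recolor them independently to free non-$c$ colors, completing the phase in at most $n$ steps. The non-frozen hypothesis plays no role here, as frozen colorings cannot exist when $k > \Delta+1$.

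For $k = \Delta+1$ only one phase is ever needed, since the nontrivial situation is when $\alpha$ uses all $\Delta+1$ colors. Call a vertex $v$ of color $c$ \emph{blocked} if its $\Delta$ neighbors display all $\Delta$ colors different from $c$, so that no recoloring of $v$ alone is valid. To eliminate $c$ I would: (i) choose $c$ so that the non-frozen witness $v_0$ lies in the $c$-class, ensuring at least one starting point is unblocked; (ii) recolor every unblocked vertex of color $c$ directly to a free color; (iii) for each remaining blocked vertex $v$, first recolor a well-chosen neighbor $u$ to create a free color at $v$. The non-exceptional hypothesis on $G$ (not $K_{\Delta+1}$, not odd $C_n$) is used precisely to guarantee that step (iii) can always be carried out, essentially by invoking the structural insights underlying the classical proof of Brooks' theorem.

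The main obstacle is step (iii): unblocking $v$ via a neighbor $u$ may require $u$ itself to have slack, potentially triggering a cascade. I expect the argument to parallel the proof of Brooks' theorem, exploiting either a vertex of degree less than $\Delta$ or, when $G$ is $\Delta$-regular, a pair of vertices at distance two that share a color under $\alpha$ (an existence which is exactly the non-frozen hypothesis in the regular case). This structural feature should yield a bounded-length chain of auxiliary recolorings per blocked vertex, keeping the amortized cost per phase linear in $n$. Ensuring that no vertex gets repeatedly blocked and unblocked, which would destroy the $O(n^2)$ bound, is likely the most delicate piece of the technical analysis, and I expect it to be controlled by a carefully chosen potential function---for instance, the number of vertices still colored $c$ combined with a tiebreaking measure of how many neighbors of such vertices have already been moved.
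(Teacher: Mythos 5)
Your proposal has the right high-level shape (drive out the extra colour, handle blocked vertices by recolouring neighbours, invoke Brooks-type structure), but the crux of the theorem is precisely your step (iii), and for it you offer only the expectation that a ``bounded-length chain of auxiliary recolorings per blocked vertex'' exists. That expectation is false in the hard case $k=\Delta+1$ with $G$ $\Delta$-regular. There the colouring can be in what the paper calls $(\Delta+1)$-reduced form: every vertex coloured $\Delta+1$ is locked \emph{and all of its neighbours are locked too}, so no single ``well-chosen neighbour $u$'' can be recoloured at all; the slack provided by the one non-frozen witness may sit arbitrarily far from the blocked vertex. The paper's proof transports this slack globally, not locally: Lemma~2 reduces the count of $(\Delta+1)$-coloured vertices by one via a three-case analysis over $(\Delta+1)$-locked paths and ``nice'' paths (with an induction on the number of free internal vertices), and Lemma~1 removes the last $(\Delta+1)$-coloured vertex with a Brooks-style Kempe-chain argument in which entire two-coloured components are swapped using $\Delta+1$ as a spare colour --- each such elimination costs $\Theta(n)$ recolourings, not $O(1)$, and there can be $\Theta(n)$ eliminations, which is where the $O(n^2)$ bound really comes from. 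Your claim that the $k=\Delta+1$ case needs ``only one phase'' of $O(n)$ steps is therefore both unsubstantiated and substantially stronger than what the paper proves; no potential-function bookkeeping is supplied that would rescue it.

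There is also a secondary gap in your ``easy range'' $k\geq\Delta+2$: your counting argument (at least $k-1-\Delta\geq 1$ free colours at each vertex of colour $c$) lets you eliminate colours only while more than $\Delta+1$ colours are in play. To pass from $\Delta+1$ used colours to a $\Delta$-colouring, the guaranteed free colour may lie outside the target palette, and a vertex whose $\Delta$ neighbours show all $\Delta$ other target colours is blocked within that palette --- so even the large-$k$ case collapses back onto the hard case your sketch does not resolve. The paper sidesteps this by citing the known result that $R_k(G)$ is connected with diameter $O(n^2)$ whenever $k\geq\Delta+2$, together with Brooks' Theorem for the existence of the target $\Delta$-colouring; your argument would need either that citation or the full machinery of the $k=\Delta+1$ case to close this step.
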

If all vertices of a graph have degree $d$, it is called {\it $d$-regular} (or just {\it regular}). 

Note that $\alpha$ can only be frozen if $k=\Delta+1$, and only if $G$ is $\Delta$-regular.  Let us briefly note that such colourings do exist: for example a 3-colouring of~$C_6$ in which each colour appears exactly twice on vertices at distance 3, or a 4-colouring of the cube in which diagonally opposite vertices are coloured alike. In fact, as we will see, the case $k=\Delta+1$ is the only cause of difficulty in the proof of our first result, which can be found in Section~\ref{s-main1}.

Using Theorem~\ref{t-delta+1} we can, with the aid of a result of  Matamala~\cite{matamala} on 
partitioning graphs into two degenerate graphs, give a characterization of $R_{\Delta+1}(G)$ 
for $\Delta\geq 3$, 
which is our next result and is proved in Section~\ref{s-main2}.

\begin{theorem} \label{t-main}
Let $G$ be a connected graph on $n$ vertices  with maximum degree~$\Delta \geq 3$.    Let $\alpha$ and $\beta$ be $(\Delta+1)$-colourings of $G$.  If  $\alpha$ and $\beta$ are not frozen colourings, then $d_{\Delta+1}(\alpha, \beta)$ is $O(n^2)$. 
\end{theorem}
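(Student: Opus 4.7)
The plan is to reduce the theorem to showing that any two $\Delta$-colourings of $G$ are at distance $O(n^2)$ in $R_{\Delta+1}(G)$, and then to establish this sub-claim with the help of Matamala's partition theorem. First observe that because $\alpha$ is not frozen, $G$ cannot be $K_{\Delta+1}$---every $(\Delta+1)$-colouring of a complete graph is frozen---and because $\Delta\ge 3$, $G$ is not an odd cycle either. Hence Theorem~\ref{t-delta+1} applies to both $\alpha$ and $\beta$, yielding $\Delta$-colourings $\alpha'$ and $\beta'$ with $d_{\Delta+1}(\alpha,\alpha')$ and $d_{\Delta+1}(\beta,\beta')$ both $O(n^2)$. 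By the triangle inequality, it then suffices to prove $d_{\Delta+1}(\gamma_1,\gamma_2)=O(n^2)$ for any two $\Delta$-colourings $\gamma_1,\gamma_2$ of $G$.

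For this sub-claim I would invoke Matamala's theorem to partition $V(G)$ as $V_1\cup V_2$ so that each $G[V_i]$ is $d_i$-degenerate with $d_1+d_2\le \Delta-1$; the hypotheses $\Delta\ge 3$ and $G\neq K_{\Delta+1}$ are exactly what is needed for such a partition to exist. Split the palette of $\Delta+1$ colours into disjoint sub-palettes $C_1,C_2$ with $|C_i|=d_i+1$, and fix a reference ``canonical'' $\Delta$-colouring $\gamma^*$ that uses only the colours of $C_i$ on vertices of $V_i$ (one exists because a $d_i$-degenerate graph is $(d_i+1)$-colourable via a degeneracy order, and $C_1,C_2$ are disjoint so no conflict arises along edges between $V_1$ and $V_2$). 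The remaining task is to show that every $\Delta$-colouring $\gamma$ can be transformed into $\gamma^*$ in $O(n^2)$ recolourings, which by triangle inequality is enough.

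The transformation would proceed in phases that exploit the two degeneracy orders: first move the vertices of $V_2$ into their $\gamma^*$-values using the degeneracy order on $G[V_2]$, recolour $V_1$ into its $C_1$-target using the degeneracy order on $G[V_1]$, and finally clean up $V_2$. At every step a free colour exists because each vertex has at most $\Delta$ neighbours while $\Delta+1$ colours are available; since the degeneracy order on $G[V_i]$ exposes at most $d_i$ constraining later-neighbours inside $V_i$, a charging argument should bound the amortised work per vertex by $O(n)$, giving the desired $O(n^2)$ total. The main obstacle is the coordination between $V_1$ and $V_2$: when pushing a vertex of $V_1$ toward its $C_1$-target, a neighbour in $V_2$ may temporarily carry a blocking colour from $C_1$, and showing that such conflicts can always be unblocked by small, localised adjustments---without inflating the step count beyond quadratic---is the technical heart of the argument.
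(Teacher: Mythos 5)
Your opening reduction is exactly the paper's: non-frozen $\alpha$ rules out $K_{\Delta+1}$ (and $\Delta\geq 3$ rules out odd cycles), so Theorem~\ref{t-delta+1} takes both $\alpha$ and $\beta$ to $\Delta$-colourings in $O(n^2)$ steps, and everything hinges on the sub-claim that any two $\Delta$-colourings are at distance $O(n^2)$ in $R_{\Delta+1}(G)$ (the paper's Lemma~\ref{l-delta}). You also correctly identify Matamala's partition theorem as the tool. But from there your argument has a genuine gap: the ``technical heart'' you name at the end---transforming an arbitrary $\Delta$-colouring into the canonical colouring $\gamma^*$ while coordinating the two parts---is left entirely unproved, and it is not a routine detail. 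Note that $\gamma^*$ is itself just another non-frozen $(\Delta+1)$-colouring, so the statement ``every $\Delta$-colouring reaches $\gamma^*$ in $O(n^2)$ steps'' is essentially equivalent to the theorem you are trying to prove; the canonical-target reformulation buys nothing by itself. Concretely, your phase plan fails to control cross-edges: when you push a vertex of $V_2$ toward its $C_2$-target, the blocking neighbours in $V_1$ may themselves be blocked by other $V_2$ vertices, and the degeneracy orders of $G[V_1]$ and $G[V_2]$ give no handle on these cascades. Worse, within each part you have only $|C_i|=d_i+1$ colours on a $d_i$-degenerate graph, which is below even the $k\geq d+2$ threshold of Cereceda's conjecture (open, as the paper notes), so ``a charging argument should bound the work'' cannot be taken on faith. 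Also, Matamala (Lemma~\ref{thm3}) gives $p_1+p_2=\Delta-2$, not $d_1+d_2\leq\Delta-1$ as you wrote; with the correct values your two palettes total $\Delta$ colours, leaving exactly one spare colour $\Delta+1$---which your plan never exploits.

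That spare colour is precisely how the paper escapes the coordination problem you flag. It applies Matamala with the extreme split $p_1=0$, $p_2=\Delta-2$, so that $S_1$ is a \emph{maximum independent set}; then, from each of $\gamma_1$ and $\gamma_2$, it parks all of $S_1$ on the unused colour $\Delta+1$ (at most $2n$ recolourings). After that, as long as colour $\Delta+1$ is avoided on $S_2$, the cross-edges impose no constraint at all, and the problem is confined to $H=G[S_2]$, which by maximality of $S_1$ has maximum degree at most $\Delta-1$ and is $(\Delta-2)$-degenerate. Lemma~\ref{l-deg} (a corollary of Theorem~\ref{t-delta+1}) drops each restricted colouring to a $(\Delta-1)$-colouring in $O(n^2)$ steps, and an induction on $\Delta$ connects those two $(\Delta-1)$-colourings using only colours $\{1,\dots,\Delta\}$. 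So the paper never needs to reach a fixed canonical colouring or to unblock cross-part conflicts: independence of $S_1$ plus the parking colour dissolves the interaction, and induction replaces your unproved amortised analysis. If you want to rescue your approach, the fix is essentially to adopt these two ideas---take $p_1=0$ and use colour $\Delta+1$ as a parking colour---at which point you have reconstructed the paper's proof.
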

Theorem~\ref{t-main} implies that $R_{\Delta+1}(G)$ contains a number of isolated vertices (representing frozen colourings) plus, possibly, one further component. 
We observe that the requirement that $\Delta \geq 3$ is necessary since, for example, $R_3(C_n)$, $n$ odd, has more than one component consisting of at least two vertices~\cite{CHJ06,CHJ06a}.

It is possible that the number of isolated vertices is zero; that is, there are no frozen $(\Delta+1)$-colourings.
For example,  suppose that $G$ is a connected regular graph on $n\not\equiv 0 \mod{(\Delta+1)}$ vertices with maximum degree $\Delta \geq 3$, and let $V_1, V_2, \dots, V_{\Delta+1}$ be the colour classes of a frozen $(\Delta+1)$-colouring~$\gamma$.  Then, by defintion, for all $i,j$, $i\neq j$, each $v \in V_i$ has a neighbour in~$V_j$ and cannot have more than one neighbour in $V_j$, as it has $\Delta$ neighbours in total. Hence, $|V_1| = \dots = |V_{\Delta+1}|$ and thus $n \equiv 0 \mod{(\Delta+1)}$, 
a contradiction.  We note that connected $\Delta$-regular graphs on $n$ vertices can always be found (unless $n$ and $\Delta$ are both odd): for example, take $n$ vertices arranged on a circle and join each to the nearest $\lfloor \Delta/2 \rfloor$ vertices on either side and also, if $\Delta$ is odd, to the diametrically opposite vertex.
 
It is also possible that there are only isolated vertices. Consider $R_4(K_4)$ for instance; and Brooks' Theorem tells us that complete graphs are the only graphs for which $R_{\Delta+1}(G)$ is edgeless, since other graphs have colourings in which only $\Delta$ colours are used and by recolouring any vertex with the unused colour we find a neighbouring colouring.

\subsection{Two Classification Results}\label{s-consequences}

Theorem~\ref{t-main} enables us to complete both a structural classification and an algorithmic classification for reconfigurations of colourings of graphs with bounded maximum degree. 
In order to explain this we need to introduce some more terminology.

Thoughout the paper we let $n$ denote the number of vertices of a graph. 
We distinguish four types of classes of $k$-colourable graphs for our structural classification. As we will see, these four types also roughly correspond to four types of complexity results.
We say that a graph class 
${\cal G}$ of $k$-colourable graphs is of\\[-6pt] 
\begin{itemize}
\item {\bf type 1} if, for all $G\in {\cal G}$, $R_k(G)$ is connected and has diameter $O(n^2)$;
\item {\bf type 2} if, for all $G\in {\cal G}$, each component of $R_k(G)$ has diameter $O(n^2)$ and~$R_k(G)$ has at most one component that is not an isolated vertex;
\item {\bf type 3} if, for all $G\in {\cal G}$, each component of $R_k(G)$ has diameter $O(n^2)$;
\item {\bf type 4} if, for all $G\in {\cal G}$, $R_k(G)$ is disconnected and has at least one component with a superpolynomial diameter.
\end{itemize}

\medskip
\noindent
Note that every graph class of type~1 is of type~2 and that every graph class of type~2 is of type 3. 
At this point the reader may wonder whether there exists a class of graphs whose reconfiguration graph of $k$-colourings is connected but does not have an (at most) quadratic diameter.  This is still an open problem~(see, for example,~\cite{BJLPP14}). The structural classification presented in Theorem~\ref{t-mainstructural} below implies that if such a graph class exists then it contains graphs whose maximum degree is unbounded.
 
For integers $k\geq 1$ and $\Delta\geq 0$,
let ${\cal G}_k^\Delta$ be the class of  connected $k$-colourable graphs with maximum degree 
at most~$\Delta$. 
Note that ${\cal G}_1^\Delta=\emptyset$ if $\Delta\geq 1$ and that ${\cal G}_k^i\subseteq {\cal G}_k^j$ for any two integers $i$ and $j$ with $i\leq j$. 

We are now ready to formally state the consequences of our
earlier results. Theorem~\ref{t-mainstructural} classifies the connectivity and the diameter of the reconfiguration graph of a graph of bounded degree in terms of the four types defined above.
Theorem~\ref{t-mainalgorithmic} completely determines the computational complexity of the {\sc $k$-Colour Path} problem restricted to graphs of bounded degree. We obtain these two classification results by combining Theorem~\ref{t-main} with a number of results 
from the literature.

\begin{theorem}\label{t-mainstructural}
Let $k\geq 1$ and $\Delta\geq 0$ be integers. Then:\\[-8pt]
\begin{itemize}
\item [(i)] ${\cal G}_k^\Delta$ is of type~1 if 
\begin{itemize}
\item [$\bullet$] $k=1$ and $\Delta=0$
\item [$\bullet$] $k\geq 2$ and $\Delta\leq k-2$.\\[-8pt]
\end{itemize}
\item [(ii)]  ${\cal G}_k^\Delta$ is of type~2 if
\begin{itemize}
\item [$\bullet$] $k=2$ and $\Delta\geq 1$
\item [$\bullet$] $k\geq 4$ and $\Delta=k-1$.\\[-8pt]
\end{itemize}
\item [(iii)] ${\cal G}_k^\Delta$ is of type~3 if
\begin{itemize}
\item [$\bullet$] $k=3$ and $\Delta\geq 2$.\\[-8pt]
\end{itemize}
\item [(iv)] ${\cal G}_k^\Delta$ contains a subclass of type~4 
if
\begin{itemize}
\item [$\bullet$] $k\geq 4$ and $\Delta\geq k$.
\end{itemize}
\end{itemize}
\end{theorem}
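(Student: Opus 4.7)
The plan is to handle each of the four cases (i)--(iv) separately, each time either quoting a standard result from the literature or giving a very short direct argument, with Theorem~\ref{t-main} as the only non-trivial ingredient we supply ourselves. For case~(i), the pair $(k,\Delta)=(1,0)$ is trivial because ${\cal G}_1^0=\{K_1\}$ and $R_1(K_1)$ is a single vertex; in the remaining range $k\geq\Delta+2$, every vertex of every proper $k$-colouring has at least two admissible colours, so a standard argument---recolour the vertices in a fixed order so as to match a chosen target colouring, always with a spare colour available---gives both connectivity and diameter $O(n^2)$, a folklore fact that can also be extracted from the reasoning in~\cite{BJLPP14}.

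For case~(ii), the sub-case $k=2$, $\Delta\geq 1$ is immediate: every connected bipartite graph has exactly two proper $2$-colourings and they disagree on every vertex, so $R_2(G)$ is an edge when $n=1$ and consists of two isolated vertices when $n\geq 2$, both of which fit the type-2 definition. The sub-case $k\geq 4$, $\Delta=k-1$ is a direct application of Theorem~\ref{t-main}, since $\Delta=k-1\geq 3$: frozen $(\Delta+1)$-colourings appear as isolated vertices while any two non-frozen $(\Delta+1)$-colourings lie in the same component at mutual distance $O(n^2)$.

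Case~(iii), namely $k=3$, $\Delta\geq 2$, follows at once from the theorem of Cereceda, van den Heuvel and Johnson~\cite{CHJ06b} stating that every component of $R_3(G)$ has diameter $O(n^2)$ when $G$ is $3$-colourable. For case~(iv), with $k\geq 4$ and $\Delta\geq k$, one has to exhibit a subfamily of ${\cal G}_k^\Delta$ whose members have a component of $R_k$ of super-polynomial diameter; I would use the graphs constructed by Bonsma and Cereceda~\cite{BC09} in their \PSPACE-hardness reductions, which for each $k\geq 4$ are $k$-colourable (bipartite when $4\leq k\leq 6$) and carry two $k$-colourings witnessing a super-polynomial distance in $R_k$. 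If the maximum degree of such a graph is less than the prescribed $\Delta$, I would raise it by attaching a bounded gadget of suitable regularity at a vertex that is never recoloured along the long path, thereby preserving the long component.

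The main obstacle is the degree-raising step in case~(iv): one has to verify that the padding does not open a shortcut between the two witness colourings. The other three cases are essentially bookkeeping once Theorem~\ref{t-main} is in hand; in particular, the only genuinely new content of Theorem~\ref{t-mainstructural} enters through the $\Delta=k-1\geq 3$ slice of case~(ii).
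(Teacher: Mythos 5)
Your cases (i)--(iii) match the paper's proof essentially verbatim: (i) is trivial for $(k,\Delta)=(1,0)$ and a known result for $k\geq\Delta+2$ (the paper cites Dyer, Flaxman, Frieze and Vigoda~\cite{DFFV06} and Cereceda's thesis~\cite{luisthesis}; your inline sketch of ``recolour in a fixed order with a spare colour'' is too naive as stated, since recolouring $v$ to its target may be blocked by a not-yet-processed neighbour and displacing that neighbour can cascade, but since you defer to the literature anyway this is cosmetic); (ii) for $k=2$ is the same two-colourings-of-a-bipartite-graph observation, and for $k\geq 4$, $\Delta=k-1$ it is the same direct application of Theorem~\ref{t-main}; (iii) is the same citation of Cereceda, van den Heuvel and Johnson~\cite{CHJ06b}.

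The genuine problem is case (iv), where you have misread the class definition. The paper defines ${\cal G}_k^\Delta$ as the connected $k$-colourable graphs of maximum degree \emph{at most} $\Delta$, and explicitly records the monotonicity ${\cal G}_k^i\subseteq{\cal G}_k^j$ for $i\leq j$. Consequently no degree-raising is ever needed: once one observes that the Bonsma--Cereceda graphs~\cite{BC09} have maximum degree at most $k$, they belong to ${\cal G}_k^k$ and hence to ${\cal G}_k^\Delta$ for every $\Delta\geq k$, and they themselves form the required type-4 subclass. Your proposal instead treats membership as if it required maximum degree equal to $\Delta$, and so introduces a padding gadget whose correctness (no shortcut between the witness colourings, preservation of $k$-colourability, connectivity and the degree bound) you yourself flag as the ``main obstacle'' and leave unverified --- so as written your proof of (iv) is incomplete exactly where the paper's is immediate. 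Worse, the one verification that actually matters, namely that the construction of~\cite{BC09} has maximum degree at most $k$, is the step you skip: you only worry about the degree being too small, never about it being too large. (For the record, your lifting plan could be salvaged --- restricting colourings of the padded graph to the original graph projects recolouring walks, so distances and disconnection lift --- but it is machinery solving a non-problem.)
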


\begin{proof}
We prove each of the four statements separately.
\begin{itemize}
\item[(i)] The case $k=1$ and $\Delta=0$ is trivial. 
The case $k\geq 2$ and $\Delta\leq k-2$ has been shown by 
Dyer, Flaxman, Frieze and Vigoda~\cite{DFFV06}; see also~\cite{BC09,CHJ06,luisthesis} for a proof.
\item[(ii)] The case $k=2$ and $\Delta\geq 1$ follows from the fact that ${\cal G}_2^\Delta$ consists of connected bipartite graphs. 
Hence, the corresponding reconfiguration graphs are either edgeless or isomorphic to a single edge (if the bipartite graph consists of a single vertex). 
The case $k\geq 4$ and $\Delta=k-1$ follows from Theorem~\ref{t-main}.
\item[(iii)]  This case has been proven by  Cereceda, van den Heuvel and Johnson~\cite{CHJ06b}.
\item[(iv)]  Let $k\geq 4$ and $\Delta\geq k$.  Bonsma and Cereceda~\cite{BC09} constructed an infinite family of $k$-colourable graphs whose reconfiguration graphs 
have components of superpolynomial diameter. It can be observed that these graphs belong to ${\cal G}_k^k$, and hence, to ${\cal G}_k^\Delta$ for all $\Delta\geq k$.
\qed
\end{itemize}
\end{proof}

\pagebreak

\begin{theorem}\label{t-mainalgorithmic}
Let $k\geq 1$ and $\Delta\geq 0$
be integers. Then 
{\sc $k$-Colour Path} restricted to ${\cal G}^\Delta_k$ is
\begin{itemize}
\item [(i)] solvable in $O(1)$ time if 
\begin{itemize}
\item [$\bullet$] $k\leq 2$
\item [$\bullet$] $k\geq 3$ and $\Delta\leq k-2$;\\[-8pt]
\end{itemize}
\item [(ii)] solvable in $O(n)$ time if 
\begin{itemize}
\item [$\bullet$] $k\geq 3$ and $\Delta=k-1$;\\[-8pt]
\end{itemize}
\item [(iii)] solvable in $O(n^2)$ time if
\begin{itemize}
\item [$\bullet$] $k=3$ and $\Delta\geq 3$;\\[-8pt]
\end{itemize}
\item [(iv)] \PSPACE-complete if
\begin{itemize}
\item [$\bullet$]  $k\geq 4$ and $\Delta\geq k$.
\end{itemize}
\end{itemize}
\end{theorem}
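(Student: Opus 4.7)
The strategy is to derive each of the four parts of Theorem~\ref{t-mainalgorithmic} from the corresponding part of the structural Theorem~\ref{t-mainstructural}, combined with known algorithmic results and one new explicit algorithm for the main case. For~(i), the case $k=1$ is immediate (a graph has at most one $1$-colouring, so the answer is always YES); for $k=2$, a connected bipartite graph has at most two proper $2$-colourings and they are adjacent in $R_2(G)$ only when $n=1$, so the answer is determined in constant time by whether $n=1$ or $\alpha=\beta$; and for $k\geq 3$ with $\Delta\leq k-2$, the connectedness of $R_k(G)$ proved by Dyer, Flaxman, Frieze and Vigoda~\cite{DFFV06} gives an unconditional YES. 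For~(iii), the $O(n^2)$ algorithm of Cereceda, van den Heuvel and Johnson~\cite{CHJ06b} for \textsc{$3$-Colour Path} on all $3$-colourable graphs applies directly to ${\cal G}_3^\Delta$.

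The central case is~(ii). For $k\geq 4$ and $\Delta=k-1$, so that $\Delta\geq 3$, Theorem~\ref{t-main} tells us that $R_k(G)$ is the disjoint union of its isolated frozen colourings together with at most one further component. The algorithm is then: test whether each of $\alpha,\beta$ is frozen; if both are frozen, output YES iff $\alpha=\beta$; if exactly one is, output NO; otherwise output YES. Frozenness of a colouring $\gamma$ is decided by checking, at each vertex $v$, that the $\Delta=k-1$ colours in $\{1,\ldots,k\}\setminus\{\gamma(v)\}$ each appear on some neighbour of $v$; this costs $O(\Delta)=O(1)$ per vertex and $O(n)$ overall, and the comparison $\alpha=\beta$ is also $O(n)$, matching the claimed linear bound. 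The remaining subcase $k=3$, $\Delta=2$ must be treated separately, because Theorem~\ref{t-main} needs $\Delta\geq 3$: the connected graphs here are paths and cycles, $R_3(G)$ is connected for paths and for even cycles (so one outputs YES), and for an odd cycle $C_n$ the components of $R_3(C_n)$ are indexed by a ``winding number'' invariant (a signed sum of consecutive colour differences taken modulo $3$) that can be computed in $O(n)$ time by a single pass around the cycle, so the answer is YES iff the two invariants coincide.

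Case~(iv) is handled by invoking the \PSPACE-hardness construction of Bonsma and Cereceda~\cite{BC09}, which for every $k\geq 4$ produces an infinite family of instances of \textsc{$k$-Colour Path} witnessing \PSPACE-completeness. The main obstacle in this case is the routine but necessary verification that their gadgets have maximum degree at most~$k$, so that the hard instances indeed lie in ${\cal G}_k^\Delta$ for every $\Delta\geq k$ and the reduction transfers to the restricted class; this is the same observation already used in the proof of Theorem~\ref{t-mainstructural}(iv). The other notable obstacle, already flagged above, is the cycle-specific analysis required in case~(ii) when $k=3$ and $\Delta=2$, since the uniform frozen-test algorithm does not apply there and a separate linear-time argument via the winding-number invariant is needed to beat the quadratic bound available from~\cite{CHJ06b}. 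Apart from these two points, the proof is a clean case analysis built on top of Theorem~\ref{t-main}.
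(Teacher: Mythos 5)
Your overall architecture is the same as the paper's: (i) follows from the connectivity results (the trivial cases plus Dyer, Flaxman, Frieze and Vigoda~\cite{DFFV06}), (iii) is a direct citation of~\cite{CHJ06b}, (iv) is the Bonsma--Cereceda construction~\cite{BC09} together with the observation that its gadgets have maximum degree $k$, and for $k\geq 4$, $\Delta=k-1$ you run exactly the paper's frozen-colouring test justified by Theorem~\ref{t-main} (you spell out the frozen/frozen and frozen/non-frozen branches, which the paper leaves implicit, and your $O(1)$ claim for $k=2$ is fine since the two $2$-colourings of a connected bipartite graph with an edge agree either everywhere or nowhere, so one vertex suffices). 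Where you diverge is the subcase $k=3$, $\Delta=2$ of (ii): the paper simply delegates paths to~\cite{luisthesis} and cycles to the single-traversal decision procedure of~\cite{CHJ06b}, whereas you reconstruct the cycle algorithm yourself, and there your details are wrong.

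Two concrete errors. First, you claim $R_3(C_n)$ is connected for even cycles, so your algorithm answers YES unconditionally on them; but the paper itself (Section~\ref{s-ourresults}) exhibits a \emph{frozen} $3$-colouring of $C_6$ (each colour on two antipodal vertices), which is an isolated vertex of $R_3(C_6)$, so taking $\alpha$ frozen and $\beta$ any non-frozen colouring gives a NO-instance on which you output YES. More generally, for even $n\geq 6$ the winding number $W$ (your signed sum of consecutive colour differences) can take any value with $W\equiv 0 \pmod 3$, $W\equiv n \pmod 2$, $|W|\leq n$, so there are genuinely distinct nontrivial components (e.g.\ $W=\pm 6$ on $C_{12}$). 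Second, for odd cycles equality of winding numbers is \emph{not} sufficient: on $C_9$ the colourings $(1,2,3,1,2,3,1,2,3)$ and $(2,3,1,2,3,1,2,3,1)$ both have winding $+9$, yet each is frozen and hence an isolated vertex of $R_3(C_9)$, so the answer is NO while your test says YES. The correct linear-time test for any cycle (even or odd) is: answer YES iff $\alpha=\beta$, or $W(\alpha)=W(\beta)$ with $|W|<n$ (equivalently, neither colouring is frozen); alternatively, do as the paper does and invoke the single-traversal characterisation of~\cite{CHJ06b}, which already incorporates both the winding invariant and the frozen degenerate case. With that subcase repaired, the rest of your proof is sound and matches the paper's.
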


\begin{proof}
We prove each of the four statements separately.
\begin{itemize}
\item[(i)] This case follows from Theorem~\ref{t-mainstructural}~(i) (the answer is always yes) unless $k=2$ and $\Delta\geq k-1=1$. 
Recall from the proof of Theorem~\ref{t-mainstructural}~(ii) that in the latter case 
the reconfiguration graph is either edgeless or isomorphic to an edge. The answer is always no in the first case and yes in the second case.
\item[(ii)]
If $k=3$ and so $\Delta=2$, then $G$ is either a path or a cycle.  We know {\sc $k$-Colour Path} always has the answer yes for paths~\cite{luisthesis}, and can be decided for cycles by a single traversal of the edges~\cite{CHJ06b}.
Now let $k\geq 4$. By Theorem~\ref{t-mainstructural}~(ii), it is necessary in this case only to check for each vertex~$v$ in the input graph $G$, for each of the two given $k$-colourings $\alpha$ and $\beta$, whether $v$ and its neighbours use every colour in $\{1,2, \ldots, \Delta+1\}$.  If they do not, neither colouring is frozen, so there is a path between them.
\item[(iii)] This follows from~\cite{CHJ06b} for the superclass consisting of all 3-colourable graphs.
\item[(iv)] This follows from the aforementioned result of Bonsma and Cereceda~\cite{BC09} as
from their proof it  can be seen that the problem is \PSPACE-complete  for ${\cal G}_k^k$, and thus for ${\cal G}_k^\Delta$ for all $\Delta\geq k$.\qed
\end{itemize}
\end{proof}

\subsection{Further Work and Open Problems}\label{s-future}

We already mentioned the open problem on the existence of a class of graphs whose reconfiguration graph of $k$-colourings is connected but does not have an (at most) quadratic diameter.
We recall another open problem 
from the literarure which is on degenerate graphs and on which we can report some partial progress due to our new results.
A graph $G$ is {\it $k$-degenerate} if every induced subgraph of $G$ has a vertex with degree at most $k$.  Note that any graph is $\Delta$-degenerate.
Cereceda~\cite{luisthesis} made the following conjecture.

\begin{conjecture}\label{c-luis}
For any pair of integers $d,k$ with $k\geq d+2$, the reconfiguration graph $R_k(G)$ of a $d$-degenerate graph $G$ has diameter $O(n^2)$.
\end{conjecture}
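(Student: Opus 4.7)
The plan is to approach Conjecture~\ref{c-luis} by induction on $n$, exploiting the degeneracy ordering of $G$. First I would fix an ordering $v_1, \ldots, v_n$ of $V(G)$ such that each $v_i$ has at most $d$ neighbours among $v_{i+1}, \ldots, v_n$; this ordering exists because $G$ is $d$-degenerate, and every induced subgraph inherits the same bound. Given two $k$-colourings $\alpha, \beta$, the inductive step would apply the hypothesis to $G' = G - v_1$, obtain a recolouring sequence $\gamma_0, \ldots, \gamma_T$ in $R_k(G')$ of length $O((n-1)^2)$, and lift it to a sequence in $R_k(G)$ by maintaining a valid colour $\kappa_i$ for $v_1$ at each step. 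Because $v_1$ has at most $d$ neighbours and $k \geq d+2$, at least two colours are legal for $v_1$ at every stage, so $\kappa_i$ can be updated whenever a neighbour of $v_1$ is recoloured to $\kappa_{i-1}$; one final recolouring of $v_1$ aligns it with $\beta(v_1)$.

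To record the partial progress promised just before the conjecture, I would observe that the results already proved in this paper establish Conjecture~\ref{c-luis} whenever the maximum degree $\Delta$ of $G$ satisfies $\Delta \leq k-1$. If $\Delta \leq k-2$, then Theorem~\ref{t-mainstructural}(i) gives diameter $O(n^2)$ directly. If instead $\Delta = k-1$, the hypothesis $k \geq d+2$ forces $d \leq \Delta - 1 < \Delta$, so $G$ cannot be $\Delta$-regular (a $\Delta$-regular graph has degeneracy $\Delta$) and hence admits no frozen colouring; Theorem~\ref{t-main} then makes $R_k(G)$ connected with diameter $O(n^2)$. The remaining open case of the conjecture is therefore precisely that of $d$-degenerate graphs containing at least one vertex of degree $\geq k$.

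The main obstacle for this remaining case is that a $d$-degenerate graph with maximum degree $\geq k$ may have $\Delta$ as large as $n-1$ (a star is $1$-degenerate), so no bounded-degree machinery applies. The inductive scheme sketched above still attacks this regime, but a naive lift charges one extra recolouring of $v_1$ per inductive step, yielding $f(n) \leq 2 f(n-1) + O(1)$ and hence exponential blow-up. To recover $O(n^2)$ one would have to argue amortised: choosing the degeneracy vertex $v_1$ whose neighbourhood is touched least often by the inductive sequence, and picking each $\kappa_i$ greedily so as to maximise its lifespan, in the hope that $v_1$ accumulates only $O(n)$ recolourings overall and the recursion becomes $f(n) \leq f(n-1) + O(n)$. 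Making such an amortisation work in the tight regime $k = d+2$, where $v_1$ has only two spare colours to juggle between, is the technical heart of the problem and is what keeps Conjecture~\ref{c-luis} open in general.
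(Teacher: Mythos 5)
This statement is a \emph{conjecture}, which the paper explicitly leaves open: it has no proof in the paper, only the partial results collected in Theorem~\ref{t-conjecture}. You correctly refrain from claiming a proof, and your diagnosis of why the naive induction fails is accurate: lifting a length-$T$ recolouring sequence over $G-v_1$ can force up to one extra recolouring of $v_1$ per step, giving $f(n)\leq 2f(n-1)+O(1)$ and exponential blow-up, and the amortised repair you sketch is only a hope, not an argument --- in the tight regime $k=d+2$ nobody knows how to carry it out, which is why the paper singles out even $d=2$, $k=4$ as challenging. So there is no gap in the sense of a falsely claimed proof; what you offer is, like the paper, partial progress plus an identification of the obstacle.

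Your partial-progress paragraph essentially reconstructs cases (iii) and (iv) of Theorem~\ref{t-conjecture} by the same route the paper takes: $\Delta\leq k-2$ via Theorem~\ref{t-mainstructural}(i), and $\Delta=k-1$ via Theorem~\ref{t-main}, using the observation that degeneracy $d<\Delta$ rules out $\Delta$-regularity and hence frozen colourings (the paper phrases this as: a $(\Delta-1)$-degenerate graph has a vertex of degree at most $\Delta-1$). Two corrections are needed, however. First, Theorem~\ref{t-main} requires $\Delta\geq 3$, so the subcase $\Delta=k-1=2$ (i.e.\ $k=3$, $d=1$, $G$ a path) needs a separate easy argument, e.g.\ Cereceda's bound. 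Second, and more substantively, your closing claim that the remaining open case is \emph{precisely} that of $d$-degenerate graphs with a vertex of degree at least $k$ overstates what is open: you have missed case (ii) of Theorem~\ref{t-conjecture}, which rests on Cereceda's theorem~\cite{luisthesis} that $k\geq 2d+1$ already forces diameter $O(n^2)$ for every $d$-degenerate graph, with no bound on $\Delta$. For example, a star is $1$-degenerate with $\Delta=n-1\geq k$, yet the conjecture holds for it for every $k\geq 3$ by that result. The genuinely open territory is therefore $d\geq 2$ together with $d+2\leq k\leq 2d$ and $\Delta\geq k$, of which $d=2$, $k=4$ is the smallest instance, exactly as the paper notes.
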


It turns out that proving (or disproving) this conjecture is a very challenging problem even for $d=2$ and $k=4$. Using Theorem~\ref{t-main} we can solve one more case, as shown
in the next theorem which summarizes our current knowledge.

\begin{theorem}\label{t-conjecture}
Let $d\geq 0$ and  $k\geq d+2$, and let $G$ be a $d$-degenerate connected graph. Then $R_k(G)$ has diameter $O(n^2)$ if\\[-10pt]
\begin{itemize}
\item[(i)] $d=0$
\item[(ii)] $d=1$
\item[(iii)] $d=\Delta-1$
\item[(iv)] $d\geq \Delta$.
\end{itemize}
\end{theorem}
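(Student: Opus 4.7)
The plan is to dispose of each of the four cases by combining Theorem~\ref{t-main} (the key ingredient, needed only for case~(iii)) with results already cited earlier in the paper.

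For case~(i), since $G$ is connected and $0$-degenerate, $G$ must be a single vertex, and $R_k(G)$ is trivially of diameter at most $1$. For case~(ii), a connected $1$-degenerate graph is a tree, and every tree has treewidth at most~$1$, so the Bonamy--Bousquet theorem~\cite{BB13} applies with $k\geq 3 = tw(G)+2$ and gives diameter $O(n^2)$. (Alternatively, one may invoke the chordal graph result of~\cite{BJLPP14}, since forests are chordal with clique number at most~$2$.)

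The substantive case is~(iii), where $d=\Delta-1$. I would split the argument according to~$k$. Since $\Delta\leq 2$ forces us into case~(i) or~(ii), I may assume $\Delta\geq 3$. First consider $k=\Delta+1$. Being $(\Delta-1)$-degenerate means that $G$ itself contains a vertex of degree at most $\Delta-1$, so $G$ is not $\Delta$-regular. Consequently $G\neq K_{\Delta+1}$, and $G$ admits no frozen $(\Delta+1)$-colouring (since, as observed in Section~\ref{s-ourresults}, a frozen $(\Delta+1)$-colouring can exist only on a $\Delta$-regular graph). Theorem~\ref{t-main} therefore bounds the distance between any two $(\Delta+1)$-colourings of $G$ by $O(n^2)$. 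For $k\geq \Delta+2$, case~(iii) reduces directly to case~(iv).

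Case~(iv) follows from the Dyer--Flaxman--Frieze--Vigoda result~\cite{DFFV06} used in Theorem~\ref{t-mainstructural}(i): every graph is trivially $\Delta$-degenerate, so $d\geq \Delta$ combined with $k\geq d+2$ yields $k\geq \Delta+2$, whence $R_k(G)$ is connected with diameter $O(n^2)$. The main obstacle is case~(iii): the crucial observation to make explicit is that $(\Delta-1)$-degeneracy simultaneously rules out both obstructions in our reconfiguration analogue of Brooks' theorem ($G=K_{\Delta+1}$ and the existence of a frozen colouring), which is exactly what lets Theorem~\ref{t-main} be upgraded from a statement about pairs of non-frozen colourings to a statement about the diameter of the entire graph $R_{\Delta+1}(G)$.
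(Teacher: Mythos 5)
Your proposal is correct, and on the two substantive cases it coincides with the paper's own argument: for case~(iii) with $k=\Delta+1$ the paper likewise observes that a $(\Delta-1)$-degenerate graph has a vertex with at most $\Delta-1$ neighbours, so no $(\Delta+1)$-colouring is frozen and Theorem~\ref{t-main} applies, while for $k\geq \Delta+2$ (and for all of case~(iv)) it invokes Theorem~\ref{t-mainstructural}~(i), exactly as you do via the Dyer--Flaxman--Frieze--Vigoda result. The only deviation is case~(ii): the paper quotes Cereceda's bound~\cite{luisthesis} that $R_k(G)$ has diameter $O(n^2)$ for every $d$-degenerate graph whenever $k\geq 2d+1$ (taking $d=1$), whereas you route through treewidth via~\cite{BB13} (or chordality via~\cite{BJLPP14}); both are legitimate, since a connected $1$-degenerate graph is a tree, which has treewidth at most $1$, so $k\geq 3=tw(G)+2$ suffices. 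Your reduction of the $k\geq\Delta+2$ subcase of~(iii) to case~(iv) is also sound, since a $(\Delta-1)$-degenerate graph is in particular $\Delta$-degenerate. A small point in your favour: you make explicit that $\Delta\leq 2$ in case~(iii) collapses into cases~(i) and~(ii), so that the hypothesis $\Delta\geq 3$ of Theorem~\ref{t-main} is genuinely available --- a step the paper leaves implicit.
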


\begin{proof}
We prove each of the four statements separately.
\begin{itemize}
\item[(i)] This case is trivial.
\item[(ii)] Cereceda~\cite{luisthesis} proved that for any two integers $d$ and $k$ with $k\geq 2d+1$, the reconfiguration graph $R_k(G)$ of any graph $G$ has diameter~$O(n^2)$. Taking $d=1$ proves the case.
As an aside, Bousquet and Perarnau~\cite{BP14} proved that for any two integers $d$ and $k$ with $k\geq 2d+2$, the reconfiguration graph $R_k(G)$ of any graph~$G$ has 
diameter~$O(n)$.
\item[(iii)] If $k=d+2=\Delta+1$ then we can apply  Theorem~\ref{t-main} after observing that a $(\Delta - 1)$-degenerate graph has a vertex with at most $\Delta - 1$ neighbours, 
so no $k$-colouring $\alpha$ is frozen. If $k\geq d+3=\Delta+2$ then we apply Theorem~\ref{t-mainstructural}~(i).
\item[(iv)] This case follows from Theorem~\ref{t-mainstructural}~(i).
\qed
\end{itemize}
\end{proof}

Another direction for future work is to consider the problem of {\it finding} a path or a shortest path in the reconfiguration graph $R_k(G)$ between two given $k$-colourings $\alpha$ and $\beta$ of a graph $G$ of maximum degree~$\Delta$. 
For $k\geq 4$ and $\Delta\geq k$ this problem is \PSPACE-hard due to Theorem~\ref{t-mainalgorithmic}~(iv). 
However, for $1\leq k\leq 3$ or $0\leq \Delta\leq k-1$, this problem is not solved in statements~(i)--(iii) of Theorem~\ref{t-mainalgorithmic}, which correspond to exactly those cases for which {\sc $k$-Colour Path} is polynomial-time solvable but which only provide a yes-answer or no-answer in polynomial time. 
Note that the maximum degree of $R_k(G)$
could be equal to $(k-1)n$. This bound, together with an $O(n^2)$ bound on its diameter, only imply an 
$(kn)^{O(n^2)}$ bound on the running time of a Breadth-First Search starting in one of the colourings $\alpha$, $\beta$.

Let us discuss what is known for $1\leq k\leq 3$ or $0\leq \Delta\leq k-1$. First of all, the problem is trivial to solve if $k\leq 2$. For $k=3$, Johnson et al.~\cite{JKKPP14} proved that it is possible in 
$O(n+m)$ time to find even a shortest path between two given $k$-colourings in the reconfiguration graph $R_3(G)$ of any $3$-colourable graph~$G$ with $n$ vertices and $m$ edges.
The case $0\leq \Delta\leq k-2$ has been shown to be solvable in $O(n^2)$ time by  Cereceda~\cite{luisthesis}.
This leaves us with the case $\Delta=k-1$ and $k\geq 4$, or equivalenty, $\Delta\geq 3$ and $k= \Delta+1$.
For this case we have the following result, the proof of which can be found in Section~\ref{s-thirdproof}.

\begin{theorem}\label{t-third}
Let $G$ be a connected graph on $n$ vertices with maximum degree $\Delta\geq 3$. Let $k = \Delta+1$. If $G$ is not regular, then 
it is possible to find in $O(n^2)$ time a path between any two given $k$-colourings $\alpha$ and 
$\beta$ in $R_k(G)$.
\end{theorem}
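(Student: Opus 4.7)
The plan is to observe that Theorem~\ref{t-main} already guarantees a path between $\alpha$ and $\beta$ of length $O(n^2)$ in $R_{\Delta+1}(G)$, and then to make the proofs of Theorems~\ref{t-delta+1} and~\ref{t-main} algorithmic so that such a path can be produced in $O(n^2)$ total time.

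For existence: since $G$ is not $\Delta$-regular, it contains a vertex $v$ with $\dgr(v)<\Delta$. In any $(\Delta+1)$-colouring $\gamma$, the neighbours of $v$ use at most $\dgr(v)<\Delta$ colours, so some colour different from $\gamma(v)$ is missing from the neighbourhood of $v$, and $\gamma$ is not frozen. Applied to $\alpha$ and $\beta$, Theorem~\ref{t-main} then yields $d_{\Delta+1}(\alpha,\beta)=O(n^2)$.

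For the algorithm, I would maintain, for every vertex $u$, a length-$(\Delta+1)$ array recording how many neighbours of $u$ carry each colour, together with the current colour of $u$. This structure has size $O(n\Delta)\subseteq O(n^2)$ and is initialised in the same time bound; moreover a single recolouring can be checked for legality and executed, together with the updates to the at most $\Delta$ adjacent arrays, in $O(\Delta)$ time. I would then run the procedure underlying Theorem~\ref{t-delta+1} on $\alpha$ and on $\beta$ separately to produce $\Delta$-colourings $\alpha^\star$ and $\beta^\star$ through sequences of at most $O(n^2)$ recolourings each. The constructive argument of Theorem~\ref{t-main} -- which combines Theorem~\ref{t-delta+1} with Matamala's theorem on partitioning $G$ into two degenerate subgraphs, a partition itself computable in polynomial time -- would then produce an $O(n^2)$-length sequence from $\alpha^\star$ to $\beta^\star$. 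Concatenating $\alpha\to\alpha^\star\to\beta^\star\to\beta$ yields the required path.

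The main obstacle is to confirm that every atomic step in the existence proofs of Theorems~\ref{t-delta+1} and~\ref{t-main} -- in particular the Kempe-chain manipulations arising through Brooks' theorem, the selection of the next vertex to recolour, and the handling of Matamala's decomposition -- inspects and modifies only a set of vertices whose size is proportional to the number of recolourings it actually produces. Once this is verified through a careful amortised analysis, the total work done by the algorithm is bounded by the length of the final recolouring sequence, and the $O(n^2)$ bound of Theorem~\ref{t-main} then gives the claimed running time.
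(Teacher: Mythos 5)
Your existence argument is fine, but the algorithmic half has a genuine gap, and it is precisely the one the paper flags as the reason the regular case remains open. You claim that the Matamala partition used in the proof of Theorem~\ref{t-main} is ``computable in polynomial time''. It is not: Lemma~\ref{thm3} (Matamala) requires $S_1$ to induce a \emph{maximum} size $p_1$-degenerate subgraph, and in the application inside Lemma~\ref{l-delta} one takes $p_1=0$, so $S_1$ must be a \emph{maximum} independent set. The maximum property is not cosmetic --- it is what guarantees that every vertex of $S_2$ has a neighbour in $S_1$, hence that $H=G[S_2]$ has maximum degree at most $\Delta-1$, which is what drives the induction on $\Delta$. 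Computing a maximum independent set is \NP-complete even for cubic graphs, so no amount of amortised bookkeeping over the recolouring steps rescues your plan: the obstruction is in constructing the decomposition at all, not in charging the work of individual recolourings.

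The paper's route is different and exploits non-regularity from the start. By Lemma~\ref{l-only}, a connected non-regular graph of maximum degree $\Delta$ is $(\Delta-1)$-degenerate; Lemma~\ref{l-degenerate} then gives a direct Cereceda-style $O(n^2)$ algorithm (walk a degeneracy ordering, eliminate colour $\Delta+1$ occurrence by occurrence) taking each of $\alpha$ and $\beta$ to a $\Delta$-colouring, completely bypassing the case analysis behind Theorem~\ref{t-delta+1} that you propose to make algorithmic. For the path between the two $\Delta$-colourings, Lemma~\ref{l-delta2} replaces Matamala by the algorithmic Mih\'ok--Wood partition (Lemma~\ref{degpartition}), which yields an independent $S_1$ and a $(\Delta-2)$-degenerate $S_2$ in $O(n^2)$ time, and then greedily enlarges $S_1$ to a \emph{maximal} independent set --- maximality (rather than maximum size) already suffices to bound the degree of $H$ by $\Delta-1$, and moving vertices out of $S_2$ preserves its degeneracy. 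If you want to salvage your outline, this substitution of Lemma~\ref{degpartition} plus greedy maximality for Lemma~\ref{thm3} is exactly the missing idea; note also that your reliance on an algorithmic Theorem~\ref{t-delta+1} is unnecessary in the non-regular case and would not by itself resolve the partition issue.
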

Hence, the only remaining case, which we leave as an open problem, is when $\Delta\geq 3$, $k = \Delta+1$ and $G$ is $\Delta$-regular.  We believe that solving this case is nontrivial,  because the straightforward approach of modifying the structural proof of Theorem~\ref{t-main} does not work. As explained in Section~\ref{s-thirdproof}, such an approach would require us to find a maximum independent set for graphs of bounded maximum degree in polynomial time. However, this problem is \NP-complete even for cubic graphs~\cite{GJS76}.

\section{The Proof of Theorem~\ref{t-delta+1}}\label{s-main1}

In order to prove Theorem~\ref{t-delta+1}, we need a number of lemmas that are mostly concerned with $(\Delta+1)$-colouring. 
We define a number of terms we will use to describe vertices of $G$ with respect to some $(\Delta+1)$-colouring.  A vertex $v$ is \emph{locked} if $\Delta$ distinct colours appear on its neighbours.  A vertex that is not locked is \emph{free}.  Clearly a vertex can be recoloured only if it is free.  If $v$ is locked and then one of its  neighbour is recoloured and $v$ becomes free, we say that $v$ is \emph{unlocked}.   
A vertex $v$ is \emph{superfree} if  there is a colour $c \neq \Delta+1$ such that neither $v$ nor any of its neighbours is coloured~$c$.  A vertex can only be recoloured with a colour other than $\Delta+1$ if it is superfree.  Note there are $\Delta-1$ distinct colours that must appear on the $\Delta$ neighbours of $v$ if it is not superfree.  We say that $G$ is in \emph{$(\Delta+1)$-reduced form}  if for every vertex~$v$ coloured with $\Delta+1$, $v$ and each of its neighbours are locked.  This implies that the distance between any pair of vertices coloured $(\Delta+1)$ is at least~3 
as no locked vertex can have two neighbours coloured $(\Delta+1)$.

The key to proving Theorem~\ref{t-delta+1} will be to show that from a $(\Delta+1)$-colouring one can recolour some of the vertices to arrive at a colouring in which colour $\Delta+1$ appears on fewer vertices. We begin by considering the case where the colour $\Delta+1$ appears on only one vertex. The proof of the following lemma is inspired by a proof of Brooks' Theorem~\cite{vizing} but  also uses some new arguments.

\begin{lemma}\label{l-cubic1}
Let $G=(V,E)$ be a connected graph on $n$ vertices with maximum degree $\Delta \geq 3$, and let $\alpha$ be a $(\Delta+1)$-colouring of $G$ with exactly one vertex~$v$ coloured $\Delta+1$. If $G$ does not contain $K_{\Delta+1}$ as a subgraph, then there exists a $\Delta$-colouring $\gamma$ of $G$ such that $d_k(\alpha,\gamma)$ is $O(n)$.
\end{lemma}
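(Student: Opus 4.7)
The plan is to prove the lemma by cases on the structure of $\alpha$ at~$v$. If $v$ is not locked, some colour $c\in\{1,\dots,\Delta\}$ is absent from $\alpha(N(v))$ and one recolouring of $v$ to $c$ finishes the job. If $v$ is locked but some neighbour $v_i$ is not locked, then since $v_i$'s neighbourhood already contains colour $\Delta+1$ via $v$, some colour $j\in\{1,\dots,\Delta\}\setminus\{i\}$ is missing from $N(v_i)$; recolouring $v_i$ to $j$ makes colour~$j$ appear twice on $N(v)$ and unlocks $v$, which we then recolour in a further step.

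The substantive case is when $v$ and every $v_i$ are locked. Since $G$ does not contain $K_{\Delta+1}$, the closed neighbourhood of $v$ is not a clique, so there is a non-adjacent pair $v_i,v_j$ among the neighbours of $v$. Following the Brooks' theorem template, I consider the Kempe chain $C=C_{ij}$ containing $v_i$, that is, the component, in the subgraph of $G$ induced by the $\{i,j\}$-coloured vertices, that contains $v_i$. If $v_j\notin C$, my plan is to swap colours $i$ and $j$ on $C$, so that $v_i$ acquires colour $j$ and matches $v_j$, freeing $v$, and then recolour $v$.

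To execute this swap as a sequence of $O(|C|)$ single-vertex recolourings, I plan to use colour $\Delta+1$ as a temporary parking colour. The key observation is that because $\alpha$ uses colour $\Delta+1$ only at~$v$, and because $v_j\notin C$, no vertex of $C$ other than $v_i$ is adjacent to~$v$; hence $\Delta+1$ is a legal temporary colour on all of $C_j$, the $j$-coloured part of $C$. Writing $C=C_i\cup C_j$ for the bipartition by colour (both parts being independent sets of $G$), I proceed in three passes: (1)~recolour every vertex of $C_j$ from $j$ to $\Delta+1$; (2)~recolour every vertex of $C_i$ from $i$ to $j$, which is legal because every $j$-coloured neighbour of a vertex in $C_i$ is now parked at $\Delta+1$ while non-chain neighbours are coloured outside $\{i,j\}$ by maximality of $C$; (3)~recolour every parked vertex of $C_j$ from $\Delta+1$ to $i$, which is legal for the analogous reason. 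A final step recolours $v$ to $i$, producing a $\Delta$-colouring in at most $2|C|+1 = O(n)$ steps.

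For the remaining situation in which $v_j\in C_{ij}$ for every non-adjacent pair, I plan to invoke the classical Brooks'-style structural analysis: such persistence forces each chain $C_{ij}$ to be a path from $v_i$ to $v_j$, forces distinct chains $C_{ij}$ and $C_{ik}$ to meet only at $v_i$, and eventually forces $N[v]$ to carry a copy of $K_{\Delta+1}$, contradicting the hypothesis. Thus some non-adjacent pair always falls into the good case and the three-pass argument above applies. The main obstacle I anticipate is precisely this structural step: carrying the Brooks'-style clique-forcing argument through in our reconfiguration setting, and verifying that any preliminary moves that might be needed along the way (for instance, recolouring a vertex of high chain-degree using a spare colour to split $C$ and reduce to the good case) cost only $O(1)$ additional recolourings, so that the total bound remains $O(n)$.
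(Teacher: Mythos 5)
Your easy cases and your three-pass parking swap are exactly the paper's opening moves (its ``reduced form'' assumption and its \emph{swap} operation, Claims~1 and~2), and your observation that the swap is legal precisely because the parked colour class avoids $N(v)$ is correct. The gap is the endgame, and you have flagged it yourself without resolving it. The structural facts you list for the persistent case --- every chain $C_{ij}$ is an $x_i$--$x_j$ path, and distinct chains meet only at shared endpoints --- do \emph{not} by themselves force $N[v]$ to span a $K_{\Delta+1}$. In the classical Kempe-chain proof of Brooks' theorem, the contradiction for a non-adjacent pair $x_1,x_2$ is obtained only \emph{after} performing one more swap, on the chain $C_{13}$, and then observing that the neighbour $u$ of $x_1$ on $C_{12}$ violates the chain-intersection property in the new colouring. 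That swap is exactly the move your parking trick cannot implement: both colour classes of $C_{13}$ contain a neighbour of $v$ ($x_1$ and $x_3$), so whichever class you try to park at $\Delta+1$ conflicts with $v$; worse, in reduced form $v$, its neighbours, and hence the endpoints of every chain are locked, so no recolouring on $C_{13}$ can even begin. Your fallback (``recolouring a vertex of high chain-degree to split $C$'') is the paper's Claim~2 and only disposes of non-path chains; it does nothing in the persistent case where all chains are genuine paths.

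The paper's proof shows what is actually needed to close this hole, and it is more than an $O(1)$ patch: instead of swapping $C_{13}$, it works with $H_{23}$, the $\{2,3\}$-chain through $u$ (the $2$-coloured neighbour of $x_1$), which in the non-trivial case is disjoint from $N(v)$ and hence \emph{is} swappable by the parking trick. After swapping $H_{23}$ it analyses how the $\{1,2\}$-chain from $x_2$ changes, locating a vertex ($w=y=t$ in its notation) that lies in two chains simultaneously and is therefore superfree, which reduces back to Claims~1 and~2 (these are its Claims~3 and~4, plus a final step making $H_{23}$ a path when it is not). So your proposal reproduces roughly the first half of the argument, but the case you defer to ``the classical Brooks'-style structural analysis'' is precisely the part where that analysis breaks down in the reconfiguration setting, and a genuinely new argument --- the paper's Claims~3 and~4 --- is required.
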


\begin{proof}
We can assume that $G$ is in $(\Delta+1)$-reduced form since  if $v$ is not locked then we can immediately recolour it; if a neighbour of $v$ is not locked then it can be recoloured and this will unlock $v$ and allow us to recolour it.  

Let us fix a labelling of the neighbours of $v$: let $x_i$ be the neighbour such that $\alpha(x_i) = i$, $1 \leq i \leq \Delta$.  Our aim is to find a recolouring sequence that unlocks $v$. There is one recolouring sequence that we will use several times.  Suppose that~$C$ is a connected component of a subgraph of $G$ induced by two colours $i$ and $j$, $\Delta+1 \notin \{i,j\}$, and  no vertex coloured $j$ in $C$ is adjacent to $v$.   First the vertices coloured $j$ are recoloured with $\Delta+1$.  Then the vertices coloured $i$ are recoloured~$j$, and finally the vertices initially coloured $j$ are recoloured $i$.  It is clear that all colourings are proper and the overall effect is to \emph{swap} the colours~$i$ and $j$ on $C$.

We say that any colouring $\gamma$ where $G$ is in $(\Delta+1)$-reduced form, only $v$ is coloured $\Delta+1$ and $\gamma(x_i)=i$, $1 \leq i \leq \Delta$, is \emph{good}.  For any good colouring $\gamma$, let~$G^\gamma_{ij}$ be the maximal connected component containing $x_i$ of the subgraph of $G$ induced by the vertices coloured~$i$ and $j$ by $\gamma$.  

We make some claims about good colourings.   When we claim that $v$ can be unlocked, it is implicit that colour $\Delta+1$ is not used on any other vertex in the graph so that unlocking $v$ allows us to reach a colouring where $\Delta+1$ is not used.

\smallskip
\noindent
 \textbf{Claim 1:} If $\gamma$ is good and $x_j \notin G^\gamma_{ij}$, then $v$ can be unlocked.
\smallskip
 
\noindent If $x_j \notin G^\gamma_{ij}$, then the only vertex adjacent to $v$ in $G^\gamma_{ij}$ is $x_i$.  Thus the colours $i$ and $j$ can be swapped on $G^\gamma_{ij}$.
Then $v$ has two neighbours with colour $j$ and is unlocked.

\smallskip
\noindent
\textbf{Claim 2:} If $\gamma$ is good and $G^\gamma_{ij}$ is not a path from $x_i$ to $x_j$, then $v$ can be unlocked.

\smallskip

\noindent  By Claim 1, we can assume that $x_i$ and $x_j$ are in $G^\gamma_{ij}$.  They must have degree 1 in $G_{ij}$ since, as $G$ is in $(\Delta+1)$-reduced form, they are locked.   Suppose that $G^\gamma_{ij}$ is not a path and consider the shortest path in $G^\gamma_{ij}$ from $x_i$ to $x_j$, and the vertex~$w$ nearest to $x_i$ on the path that has degree more than 2.  Then $w$ has at least three neighbours coloured alike in $G$ and is superfree and can be recoloured with a colour other than $i$, $j$ or $\Delta+1$.  Call this new colouring $\gamma'$ and note that, by the choice of $w$, $G^{\gamma'}_{ij}$ does not contain $x_j$.  Now Claim 1 implies Claim 2.

\smallskip

As $G$  is $K_{\Delta+1}$-free, $v$ and its neighbours are not a clique so we can assume that $x_1$ and $x_2$ are not adjacent.  Let $u$ be the unique neighbour of $x_1$ coloured 2.  For a good colouring $\gamma$, note that $u$ is in $G^\gamma_{12}$, and  let $H^\gamma_{23}$ be the component of the subgraph of $G$ induced by the vertices with colour $2$ and $3$ that contains~$u$.  

\smallskip
\noindent
\textbf{Claim 3}: If $\gamma$ is good and $u$ has more than one neighbour in $H^\gamma_{23}$, then $v$ can be unlocked.

\smallskip

\noindent If $G^\gamma_{12}$ is not a path, then use Claim 2.   Otherwise $u$ has two neighbours coloured 1; if $u$ has two neighbours in $H^\gamma_{23}$, then it also has two neighbours coloured 3 and is superfree.  Recolour it and apply Claim 1.

\smallskip
\noindent
\textbf{Claim 4}: If $\gamma$ is good and $H^\gamma_{23}$ is a path, then $v$ can be unlocked.

\smallskip

\noindent By Claim 2 we can assume $G^\gamma_{23}$ is a path.  If $H^\gamma_{23}=G^\gamma_{23}$, then we can use Claim~3.  So we assume $H^\gamma_{23} \neq G^\gamma_{23}$ and so $x_2, x_3 \notin H_{23}$ and $H_{23}$ contains no neighbour of~$v$.  Let $\gamma'$ be the colouring obtained by swapping the colours 2 and 3 on $H^\gamma_{23}$.

By Claim 3, $u$ is an endvertex of $H^\gamma_{23}$.  Let the other endvertex be $w$.  (If $w=u$, then $u$ has no neighbour coloured 3 and can be recoloured.  Then use Claim 2.)

If $G^{\gamma'}_{12}$ is not a path from $x_1$ to $x_2$, we use Claim 2.  If it is such a path, then let the unique neighbour of $x_1$ in $G^{\gamma'}_{12}$ be $y$ and clearly $y \in H^\gamma_{23}$.  From $x_2$ traverse $G^{\gamma'}_{12}$ until the last vertex $z$ that is also in  $G^{\gamma}_{12}$ is reached.  Let $t$ be the next vertex along from $z$ towards $x_1$ in $G^{\gamma'}_{12}$.  Clearly $t$ is also in $H^\gamma_{23}$.  In fact, we can assume that $w=y=t$ since if $y$ or $t$ has degree 2 in $H_{23}$ as well as in $G^{\gamma'}_{12}$ it has two neighbours coloured 1 and two neighbours coloured 3 in $\gamma'$ and is superfree.  It can be recoloured and then Claim 2 is used.  

So $x_1wz$ is coloured $131$ in $\gamma$ so is in $G^\gamma_{13}$.  Then $z$ is in both $G^\gamma_{13}$ and $G^\gamma_{12}$ so is superfree and can be recoloured so that Claim 2 can be used.  This completes the proof of Claim~4.

\smallskip

To complete the proof: we know that the initial colouring $\alpha$ is good. If none of the four claims can be used, then consider $H^\alpha_{23}$.  We know that $u$ has degree 1 in $H_{23}$ but $H_{23}$ is not a path.  So traversing edges away from $u$ in $H^\alpha_{23}$, let $s$ be the first vertex reached  with degree 3.  Then $s$ is superfree and can be recoloured so that $H_{23}$ becomes a path, and then Claim  4 can be used.
\qed
\end{proof}

In Lemma~\ref{l-cubic2}, we shall see how 
the number of vertices coloured $\Delta+1$ can be reduced when more than one is present.  First we need some definitions and a lemma. Let $P$ be a path: 
  \begin{itemize}
   \item  $P$ is \emph{nearly $(\Delta+1)$-locked} if its endvertices are locked and coloured $\Delta+1$;
   \item $P$ is \emph{$(\Delta+1)$-locked} if it is nearly $(\Delta+1)$-locked and every vertex on the path is locked.  
  \end{itemize}

\begin{lemma} \label{l-first}
Let $G$ be a graph in $(\Delta+1)$-reduced form. If $G$ has a $(\Delta+1)$-locked path $P$, then each endvertex of $P$ is an endvertex of an $(\Delta+1)$-locked path of length 3.
\end{lemma}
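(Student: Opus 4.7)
Fix an endvertex $u$ of $P$; by symmetry it suffices to exhibit a $(\Delta+1)$-locked path of length~$3$ having $u$ as an endvertex. The plan is to choose a shortest $(\Delta+1)$-locked path $Q = u\, y_1 y_2 \cdots y_k$ with $u$ as one endvertex (which exists, since $P$ itself is one such path), and then to truncate $Q$ to length~$3$ by discarding its tail past $y_2$ and replacing it by a carefully chosen neighbour $w$ of $y_2$. The lower bound $k \geq 3$ comes for free: the other endvertex of $Q$ is coloured $\Delta+1$, and the $(\Delta+1)$-reduced form forces $\Delta+1$-coloured vertices to lie at pairwise graph-distance at least~$3$.

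The central observation is that $y_2$ has a uniquely determined neighbour $w$ of colour $\Delta+1$. Indeed, $y_2$ lies on the $(\Delta+1)$-locked path $Q$ and is therefore locked, so it has exactly $\Delta$ neighbours carrying $\Delta$ distinct colours. Because $y_2$ lies at graph-distance at most~$2$ from the $\Delta+1$-coloured vertex $u$, the reduced-form bound rules out $y_2$ itself being coloured $\Delta+1$; write $c \in \{1,\ldots,\Delta\}$ for its colour. The $\Delta$ colours appearing on the neighbours of $y_2$ must then form precisely the set $\{1,\ldots,\Delta+1\}\setminus\{c\}$, so colour $\Delta+1$ occurs on exactly one neighbour $w$ of $y_2$; and $w$ is automatically locked, being $\Delta+1$-coloured in reduced form.

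The step I expect to be the main obstacle is showing $w \neq u$, since otherwise the intended sequence $u\, y_1 y_2 w$ would repeat a vertex. I would rule this out by contradiction using the minimality of $Q$. If $w = u$, then $u y_2$ is an edge of $G$, and two subcases arise. When $k = 3$, the walk $u\, y_2 y_3$ is a path of length~$2$ between the two $\Delta+1$-coloured vertices $u$ and $y_3 = y_k$, contradicting the reduced-form distance lower bound. When $k \geq 4$, the sequence $u\, y_2 y_3 \cdots y_k$ is a strictly shorter $(\Delta+1)$-locked path having $u$ as an endvertex, since its interior vertices all lie on $Q$ and are therefore locked; this contradicts the minimality of $Q$.

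With $w \neq u$ in hand, the four vertices $u, y_1, y_2, w$ are pairwise distinct: distinctness within $\{u, y_1, y_2\}$ is immediate from $Q$, while $y_1$ and $y_2$ differ from $w$ because $w$ is coloured $\Delta+1$ but they are not. The edges $u y_1$ and $y_1 y_2$ come from $Q$, and $y_2 w$ was produced in the previous step, so $u\, y_1 y_2 w$ is a path of length~$3$. Its endpoints $u$ and $w$ are $\Delta+1$-coloured and locked, and its interior vertices $y_1$ and $y_2$ are locked as vertices of $Q$. Hence $u\, y_1 y_2 w$ is a $(\Delta+1)$-locked path of length~$3$ with $u$ as an endvertex; the same argument applied to the other endpoint of $P$ completes the proof.
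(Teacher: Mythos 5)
Your proof is correct and follows essentially the same route as the paper: take a shortest $(\Delta+1)$-locked path $Q$ ending at $u$, observe that the vertex at distance $2$ along $Q$ is locked and therefore has a neighbour $w$ coloured $\Delta+1$ with $w \neq u$, and truncate to the length-$3$ path $u\,y_1y_2w$. The only difference is expository: where the paper simply remarks that minimality makes $Q$ induced (so $u$ is not adjacent to $y_2$, forcing $w \neq u$), you verify the one relevant non-adjacency explicitly via the same minimality-of-$Q$ argument, which is a sound unpacking of the paper's parenthetical claim.
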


\begin{proof}
As we noted, by the definition of $(\Delta+1)$-reduced form, a path between two vertices coloured $\Delta+1$ has length at least 3.   Let $u$ be one endvertex of $P$ and let $Q$ be the shortest $(\Delta+1)$-locked path that ends at $u$ (so $Q$ is induced).  Let $v$ be the vertex on $Q$ at distance 2 from $u$.  Then, as $v$ is locked and not a neighbour of $u$, it has a neighbour $w$ coloured $\Delta+1$ that is not $u$ and the path from $u$ to $w$ has length 3.
\qed
\end{proof}

A path is \emph{nice} if it is a nearly $(\Delta+1)$-locked path, it contains free vertices and the  endvertices and their neighbours are the only locked vertices. Notice that a nice path is not necessarily induced and, in particular, may contain a $(\Delta+1)$-locked subpath.

\begin{lemma}\label{l-cubic2}
Let $G$ be a connected 
 graph on $n$ vertices with  maximum degree $\Delta \geq 3$, let $\alpha$ be a $(\Delta+1)$-colouring of $G$, and suppose that  $G$ is in $(\Delta+1)$-reduced form. If $G$ has at least two $(\Delta + 1)$-locked vertices and is not frozen, then there exists a $(\Delta+1)$-colouring $\gamma$ of $G$, such that $d_{\Delta+1}(\alpha, \gamma) =  O(n)$ and fewer vertices are coloured $\Delta+1$ with $\gamma$ than with $\alpha$.
\end{lemma}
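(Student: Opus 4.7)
The strategy is to locate a nice path connecting two $(\Delta+1)$-coloured vertices of $G$ and then propagate recolourings along it to unlock one endpoint, which we can then recolour to reduce the number of $(\Delta+1)$-coloured vertices. The plan naturally splits into two parts: (i) establishing the existence of a suitable nice path, and (ii) exhibiting a cascade of $O(n)$ recolourings along it that produces the desired $\gamma$.

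For the existence of a nice path, I would argue as follows. Since $\alpha$ is not frozen there is at least one free vertex $f$, and since $G$ is connected with at least two $(\Delta+1)$-coloured vertices $u$ and $w$ there is a path in $G$ between them. If some such path has free vertices among its interior vertices (other than the neighbours of the endpoints), a nice path can be extracted directly. Otherwise, every pair of $(\Delta+1)$-coloured vertices is connected only by $(\Delta+1)$-locked paths; in that case I would invoke Lemma~\ref{l-first} to produce a $(\Delta+1)$-locked path of length~$3$, and then use the connectedness of $G$ together with the existence of $f$ to bridge from the locked region to a free vertex, possibly performing preliminary local recolourings that produce a nice path in an adjacent configuration.

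Given a nice path $u = p_0, p_1, \ldots, p_{\ell-1}, p_\ell = w$ with free interior vertices $p_2, \ldots, p_{\ell-2}$, I would cascade recolourings inward from a free vertex toward $u$. At each step, the free vertex $p_i$ currently closest to $p_1$ is recoloured to a colour chosen so that the adjacent locked vertex $p_{i-1}$ becomes free. The mechanism relies on $p_{i-1}$ being locked, so its $\Delta$ neighbours use every colour in $\{1, \ldots, \Delta+1\} \setminus \{\alpha(p_{i-1})\}$; hence any available colour for $p_i$ that coincides with the colour of another neighbour of $p_{i-1}$ creates a repeated colour in $p_{i-1}$'s neighbourhood and unlocks it. Iterating down the path frees $p_1$, then $u$, after which $u$ is recoloured to a non-$(\Delta+1)$ colour, strictly reducing the count. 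Because the nice path has at most $n$ vertices, the total recolouring sequence has length $O(n)$.

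The main obstacle will be to verify that the cascade never stalls: at each step, the free vertex being recoloured must have an available colour that also matches some colour on another neighbour of the next target locked vertex, and moreover one can choose such a colour without inadvertently introducing new $(\Delta+1)$-coloured vertices that the cascade fails to erase. I would handle this by a careful colour-counting argument, using that a free vertex has at least two available colours while a locked vertex's neighbours use $\Delta$ distinct colours, so the required overlap holds with room to spare once $\Delta \geq 3$. A secondary difficulty is the construction of a nice path when all paths between $(\Delta+1)$-coloured vertices happen to be $(\Delta+1)$-locked; there one has to combine Lemma~\ref{l-first} with the guaranteed presence of a free vertex elsewhere in $G$ to set up the required configuration via a bounded number of preliminary recolourings.
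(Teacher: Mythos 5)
Your overall shape matches the paper's (find a nice path, cascade recolourings along it, fall back on Lemma~\ref{l-first} when no nice path exists), but the core of your argument contains a genuine gap: the claim that ``a free vertex has at least two available colours'' is false, and with it the assertion that the cascade avoids stranding new $(\Delta+1)$-coloured vertices ``with room to spare once $\Delta \geq 3$.'' A locked vertex necessarily has degree $\Delta$ with all $\Delta$ neighbours distinctly coloured, so recolouring \emph{any} neighbour automatically unlocks it (the vacated colour was unique in its neighbourhood) --- the ``overlap'' condition you worry about is vacuous. The real failure mode is the one you dismiss: a free vertex of degree $\Delta$ whose neighbourhood carries $\Delta-1$ distinct colours has exactly \emph{one} legal recolour, and that colour may be $\Delta+1$ itself. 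This is precisely the free/superfree distinction the paper sets up before the lemma. Consequently your cascade can terminate having traded the $\Delta+1$ at the endpoint $w$ for a new $\Delta+1$ at an interior vertex that is now locked along with all its neighbours, for a net change of zero. Almost the entire proof in the paper is devoted to exactly these stuck configurations: Case~2 runs an induction on the number $\ell$ of free vertices on a shortest nice path, whose induction step is your cascade (recolour $u_\ell$ to $\Delta+1$, then free $t$ and $w$, then pass to the shorter nice path ending at $u_\ell$), but whose base cases $\ell=1,2$ require an extensive analysis of the off-path neighbours $x_1,x_2,x_3$ of $u_1,u_2$ --- whether they are free, locked or superfree, whether $u_1$ and $u_2$ share a neighbour, whether chords such as $x_1s$ or $x_3s$ exist --- and in several subcases one must abandon the first recolouring sequence entirely and restart from $\alpha$ with a different initial move. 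No uniform colour-counting argument replaces this; your proposal is missing the entire mechanism that resolves the stalled cascade.

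The existence half is also thinner than it can afford to be. Extracting a nice path from an arbitrary path between two $(\Delta+1)$-coloured vertices is not ``direct'': a nice path requires \emph{every} vertex other than the endpoints and their on-path neighbours to be free, and one must take a shortest such path and rule out chords, as the paper does at the start of Case~2. When no nice path exists, the paper's Case~3 is not a ``bounded number of preliminary recolourings'' performed generically: it first proves a $(\Delta+1)$-locked path must exist (via shortest nearly-locked paths), then shows the shortest path $Q$ from a free vertex to a $(\Delta+1)$-locked vertex has length exactly $2$ (using the reduced form), invokes Lemma~\ref{l-first} to get a locked path $R=wtsv$ of length $3$, and analyses the concatenation $M=wtsvau$, splitting on whether the one possible chord $as$ is present, in each case recolouring so as to manufacture a nice path and hand off to Case~2. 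You would also need the paper's Case~1 (a free vertex adjacent to a $(\Delta+1)$-locked path), where the net decrease comes from spending one new $\Delta+1$ to erase two old ones --- a pattern your single-endpoint cascade never produces. In short, the proposal identifies the right skeleton but leaves unproved exactly the parts where the lemma is hard.
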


\begin{proof} We consider a number of cases.

\smallskip
\textbf{Case 1}: There exists a free vertex $u$ adjacent to a $(\Delta+1)$-locked path $P$. 
\smallskip

\noindent Let $b$ be the vertex on the path adjacent to $u$. 
As $b$ is locked it has a neighbour $a$ coloured $\Delta+1$.  Let $c$ be a neighbour of $b$ on $P$ other than $a$.  As $c$ is locked it has a neighbour $d$ coloured $\Delta+1$.  

Since $G$ is in $(\Delta+1)$-reduced form, $u$ is not adjacent to $a$ or $d$ but might be adjacent to $c$.  In each case, it is routine to verify that by recolouring $u$ to $\Delta+1$,  $b$ and $c$ can both be recoloured unlocking $a$ and $d$ and allowing them to be recoloured.  Thus the number of vertices coloured $\Delta +1$ is reduced.

\smallskip
\textbf{Case 2}:  $G$ has a nice path. 
\smallskip

\noindent  Let $P$ be a shortest nice path.  Let the endpoints  be $v$ and $w$ with neighbours $s$ and $t$ on $P$ respectively.  If $s$ and $t$ are adjacent, then the path $vstw$ is $(\Delta+1)$-locked and has a free vertex adjacent to $s$ so use Case 1.  Thus assume that $P$ is induced since the presence of any other edge would imply either a shorter nice path could be found or that the graph was not in $(\Delta+1)$-reduced form.

We use induction on the number $\ell$  of free vertices in $P$ to show that there is a sequence of recolourings that lead to a colouring that has fewer vertices coloured~$\Delta+1$. 

If $\ell = 1$, let $u$ be the free vertex in $P$. Recolour $u$ to $\Delta+1$. Now~$s$ and $t$ have two neighbours coloured $\Delta+1$ and can be recoloured. Then~$v$ and~$w$ are unlocked and can both be recoloured, and this leaves one vertex on~$P$ coloured $\Delta+1$ rather than two.

Suppose that $\ell = 2$. Let $P = vsu_1u_2tw$ where $u_1$ and $u_2$ are  free vertices.  First suppose that $u_1$ or $u_2$, say $u_1$, is superfree: recolouring $u_1$ to a colour $c \not= \Delta+1$ unlocks $s$; recoloured $s$ unlocks $v$ which, in turn, allows us to recolour it, and the number of vertices coloured $\Delta+1$ has been reduced as required. 
Similarly if $u_1$ is not superfree but a neighbour $x$ is, then $x$ can be recoloured to a colour $c \neq \Delta+1$ and if $xs$ is an edge then $s$ is unlocked and so $v$ can be recoloured, and otherwise $u$ is now superfree, the colours in the neighbourhood of $s$ are unchanged, and the preceding argument can be applied.

Thus henceforth we can assume that $u_1$, $u_2$ and their neighbours are not superfree which implies that they have degree $\Delta$.

\smallskip
\noindent
\textbf{Subcase 2.1}: \textit{$u_1$ and $u_2$ do not share a neighbour}.
Let $x_1$ and $x_2$ be neighbours of $u_1$ and $u_2$ not in~$P$.  Clearly $x_1 \not= x_2$ and $u_1x_2$ and $u_2x_1$ are not edges.

\smallskip
\noindent
\textbf{Subcase 2.1.1}: \textit{$x_1$ is locked}. 
We know $x_1$ has a $(\Delta+1)$-locked neighbour, and this must be $v$ (if it is some other vertex $z$, then $vsu_1x_1z$ is a nice path that is shorter than $P$).

Suppose $x_1s$ is not an edge. Recolour $u_1$ to $\Delta+1$.  This unlocks $x_1$ which can be recoloured with $\alpha(u_1)$ which, in turn, unlocks $v$ and allows us to recolour it with $\alpha(x_1)$.  If $u_1$ is free, it can be recoloured and the number of vertices coloured $\Delta+1$ is reduced and we are done.  If $u_1$ is locked, then note that $s$ has been unlocked (as it no longer has a neighbour coloured $\alpha(u_1)$).  Thus we can recolour~$s$ and then recolour $u_1$ with $\alpha(s)$ and again we have removed one instance of the colour $\Delta+1$.

Suppose instead that $x_1s$ is an edge.   Notice that  $\alpha(s)$, $\alpha(u_1)$ and $\alpha(x_1)$ are distinct as the three vertices form a triangle. Recolour $u_1$ with $\Delta+1$ and then~$s$ with $\alpha(u_1)$. Now $v$ is unlocked and can be recoloured with $\alpha(s)$.  If $u_1$ is free, then recolour it and we are done.  Otherwise this sequence of recolourings leaves $u_1$ locked (with $\alpha(u_1)$ and $\alpha(x_1)$ as the colours on $s$ and $x_1$ respectively).  So, from $\alpha$, we do the following instead: again start by recolouring $u_1$ with $\Delta+1$, but then recolour $x_1$ with $\alpha(u_1)$ to unlock $v$.  Now that $\alpha(x_1)$ is not used on a neighbour of $u_1$, $u_1$ is free and can be recoloured.

\smallskip
\noindent
\textbf{Subcase 2.1.2}: \textit{$x_1$ is free}. 
If $x_2$ is locked, we can, by symmetry, use the previous subcase, so we can assume that both $x_1$ and $x_2$ are free. Recolour $u_2$ to $\Delta+1$.  Then $t$  is unlocked and can be recoloured which, in turn, unlocks $w$ allowing us to recolour it too. If $u_2$ is free, we recolour it and are done.  If $u_1$ is free, we recolour it and unlock $u_2$ and, again, recolour it.

If $u_1$ and $u_2$ are both locked, observe that $x_1$ is still free as it has no neighbour coloured $\Delta+1$ since $u_2x_1$ is not an edge. Recolour $x_1$ to $\Delta+1$, and then recolour $u_1$ to $\alpha(x_1)$.  Note that now $s$ has no neighbour coloured $\alpha(u_1)$ and is free and can be recoloured so that $v$ is unlocked and can also be recoloured.  By recolouring $u_1$, we also unlock $u_2$, so we recolour it and are done.

\smallskip
\noindent
\textbf{Subcase 2.2}:  \textit{$u_1$ and $u_2$ share a neighbour}.
Let $x_1$ be a neighbour of $u_1$ and $u_2$. Since $P$ is induced, $x_1$ is not in $P$.  If $x_1$ is locked, then let its neighbour coloured $\Delta+1$ be $y$.  Then $vsu_1x_1y$ is a shorter nice path unless $y=v$.  By an analagous argument we need $y=w$.  This contradiction tells us that $x_1$ must be free.

If $x_1$ is joined to both $s$ and $t$, then $vsx_1tw$ is a shorter nice  path.  So, without loss of generality, assume that $x_1t$	is not an edge. Thus as $u_2$ has a neighbour that is not adjacent to $x_1$, $x_1$ has a neighbour $x_3$ that is not adjacent to $u_2$ since both have degree $\Delta$.

\smallskip
\noindent
\textbf{Subcase 2.2.1}: \textit{$x_3 = s$}.
Recolour $u_1$ with $\Delta+1$ and then $s$ with $\alpha(u_1)$. Now $v$ is unlocked and can be recoloured with $\alpha(s)$. If $u_1$ is free, then recolour it and we are done. If $u_2$ or $x_1$ is still free, then recolour one of them to unlock $u_1$, which in turn can be recoloured and are done. Otherwise this sequence of recolourings leaves $u_1, u_2$ and $x_1$ locked so $x_1$ is the only neighbour of $u_2$ coloured $\alpha(x_1)$.  So, from $\alpha$, we do the following instead: recolour $x_1$ with $\Delta+1$ to unlock $s$ and then $v$. If $x_1$ can be recoloured, then we do so and are done. Otherwise notice that $\alpha(x_1)$ is not used on a neighbour of $u_2$. It is thus free and can be recoloured to unlock $x_1$ and allow us to recolour it.

\smallskip
\noindent
\textbf{Subcase 2.2.2}:  \textit{$x_3 \not=s$, and $x_3$ is free}.
First, suppose $x_3s$ is an edge.  Recolour $u_2$ to $\Delta+1$,  $t$ to $\alpha(u_2)$ and $w$ to $\alpha(t)$. If either $u_2$ or one of its neighbours is now free, $u_2$ can be recoloured and we are done. Otherwise $u_1$, $u_2$ and $x_1$ are all locked,  but $x_3$ is still free since it has no neighbour coloured $\Delta+1$. Recolour $x_3$ to $\Delta+1$ to unlock $x_1$; then recolour $x_1$ to unlock and recolour $u_2$.  As $x_3s$ is an edge, $s$ has two neighbours coloured $\Delta+1$. Thus we recolour $s$ to unlock $v$.

If $x_3t$ is an edge we can use a similar argument.  So suppose $x_3s$ and $x_3t$ are not edges. Recolour $u_2$ to $\Delta+1$, to unlock and recolour first $t$ and then $w$.  It is possible to recolour $u_2$ unless it and all its neighbours are locked. This implies that $u_1$, $x_1$ and $u_2$ are locked. We consider two subcases.

\smallskip
\noindent
\textbf{Subcase 2.2.2.1}: \textit{$u_1x_3$ is not an edge}.
We recolour $x_3$ to $\Delta+1$ to unlock and recolour $x_1$ and then $u_2$. Notice that $u_1$ is now free since it has no neighbour coloured $\Delta+1$. Recoloured $u_1$ unlocks $s$, so we recolour it, which in turn unlocks~$v$. Observe that $x_1$ now has two neighbours $u_1$ and $x_3$ with colour $\Delta+1$ so is free.  If $u_1$ or $u_3$ is free, we can recolour at least one of them directly and we are done. Otherwise, we recolour $x_1$ so that $x_3$ and $u_1$ can now be recoloured.

\smallskip
\noindent
\textbf{Subcase 2.2.2.2}: \textit{$u_1x_3$ is an edge}.
Recolour $u_3$ to $\Delta+1$, then recolour $u_1$, $s$ and $v$. Observe that $x_1$ now has two neighbours~$u_2$ and $u_3$ with colour $\Delta+1$. If $u_2$ or $u_3$ are free, we are done. Otherwise, recolour $x_1$, then recolour $u_2$ and $x_3$, and we are done. 

\smallskip
\noindent
\textbf{Subcase 2.2.3}:  \textit{$x_3\not=s$, and $x_3$ is locked}.
Then $x_3$ has a $(\Delta+1)$-locked neighbour~$y$. If $y = v$, the path $H =  vx_3x_1u_2tw$ is nice with two free vertices $x_1$ and $u_2$. Furthermore, $u_1$ is free and a neighbour of $x_1$ and $u_2$, in which case $H$ satisfies the previous subcase unless $x_3$ and $t$ are adjacent in which case use Subcase 2.1. A similar argument can be made if $y = w$ or $y \not\in \{v,w\}$. 

This completes the case $\ell = 2$.  (We note that if we wished to use the proof to construct an algorithm, we would first check whether $x_3$ is superfree as in this case the proof can be simplified in many places.)

\smallskip

Now suppose that for all $i < \ell$, if there is a nice path containing $i$ free vertices, the number of vertices coloured $\Delta+1$  can be reduced.  Suppose that the shortest such path is $P=vsu_1u_2 \dots u_\ell tw$ where $\ell \geq 3$. We recolour $u_\ell$ to $\Delta+1$, then $t$ and then $w$.  If $u_\ell$ or one of its neighbours is free, then $u_\ell$ can be recoloured and we are done. Otherwise, $u_\ell$ and $u_{\ell-1}$ are locked. Consider the path $P' = vsu_1 \dots u_{\ell-2}u_{\ell-1}u_\ell$. By our inductive hypothesis, the number of colour $\Delta+1$ vertices in $P'$ can be reduced.  Case 2 is complete.

\smallskip
\noindent  After Cases 1 and 2 we are left with:
\smallskip

\textbf{Case 3}: There does not exist a free vertex adjacent to a $(\Delta+1)$-locked path and $G$ has no nice path.

\smallskip

\noindent As $G$ contains more than one $(\Delta+1)$-locked vertex, it contains a nearly $(\Delta+1)$-locked path; let $P$ be the shortest and let $v$ and $w$ be its endvertices.  As $G$ is in $(\Delta+1)$-reduced form, $v$, $w$ and their neighbours are locked.  If $P$ contains no other vertices, it is $(\Delta+1)$-locked.  Otherwise, since there are no nice paths, $P$ contains another locked vertex $u$.  Let $y$ be the neighbour of $u$ coloured $\Delta+1$.   If~$y$ is on $P$, then we can assume, without loss of generality, that it is not between~$v$ and $u$.  Then, whether or not $y$ is on $P$, the subpath from $v$ to $u$ plus the edge $uy$ is a shorter nearly $(\Delta+1)$-locked path.  This contradiction proves that $G$ must contain a $(\Delta+1)$-locked path.

As $G$ is not frozen, it contains a free vertex.  Let $Q$ be the shortest path in $G$ that joins a free vertex to a  $(\Delta+1)$-locked vertex.  Let $v$ be the $(\Delta+1)$-locked endvertex.  So $v$ is an endpoint of a $(\Delta+1)$-locked path~$R$, and, by Lemma~\ref{l-first}, we can assume that $R$ has length $3$. 

Let $u$ be the endvertex of $Q$ that is free. By the minimality of $Q$, $u$ is the only free vertex in $Q$.  Let $a$ be the neighbour of $u$ in $Q$.  As $a$ is locked it has a $(\Delta+1)$-locked neighbour $z$.   Thus  we must have $z=v$ and $Q=vau$.

Let $R = wtsv$.  Observe that $us$, $ut$, $uv$ and $uw$ cannot be edges as no locked path has a free neighbour. Thus the vertices of $R$ and $Q$ other than $v$ are distinct.  Consider the (not necessarily induced) path $M= wtsvau$.  Notice also that $at$ is not an edge else the free vertex $u$ is adjacent to the $(\Delta+1)$-locked path $vatw$.

Suppose $M$ is an induced path. Recolour $u$ with $\Delta+1$ to unlock and recolour $a$ and then $v$.  If $u$ is not locked, then recolour and we are done. Else notice that the vertices $v$ and $s$ are free, and the vertices $u,a, t, w$ are locked. Consequently, we have that $M$ is a nice path, and by Case $2$ we are done. 

The only edge that might be present among the vertices of $M$ is $as$ so suppose this exists. Recolour $u$ with $\Delta+1$ to unlock and recolour first $a$ and then $v$. If $u$ or any of its neighbours are free, $u$ can be recoloured and we are done.  Otherwise note that recoloured $v$ unlocks $s$. It follows that the path $H = uastw$ is nice, and we can use Case $2$. This completes Case 3. 

\smallskip

\noindent As each vertex is recoloured a constant number of times,  the lemma follows. 
\qed
\end{proof}

We are now ready to prove Theorem~\ref{t-delta+1}, which we first restate.

\medskip
\noindent
{\bf Theorem~\ref{t-delta+1}.}
{\it 
Let $G$ be a connected graph on $n$ vertices with maximum degree~$\Delta \geq 1$, and let $k \geq \Delta+1$.    Let $\alpha$ be a $k$-colouring of $G$.  If $\alpha$ is not frozen and $G$ is not $K_{\Delta+1}$ or, if $n$ is odd, $C_n$, then there exists a $\Delta$-colouring $\gamma$ of $G$ such that~$d_k(\alpha, \gamma)$ is $O(n^2)$.}

\begin{proof}
If $k > \Delta +1$, then, by Brooks' Theorem, a $\Delta$-colouring $\gamma$ exists in $R_{k}(G)$ unless $G$ is complete or an odd cycle.  We know that, in this case, $R_{k}(G)$ is connected and has diameter $O(n^2)$ so certainly $d_k(\alpha, \gamma)$ is $O(n^2)$.  

Suppose that $k = \Delta +1$ and $G$ is in $(\Delta+1)$-reduced form with $\alpha$: if not, we try to recolour each vertex with colour $\Delta+1$ either directly or by first recolouring one of its neighbours. Repeatedly applying Lemma \ref{l-cubic2} starting from $\alpha$, we obtain a $(\Delta+1)$-colouring $\gamma'$ by $O(n^2)$ recolourings such that at most one vertex is coloured $(\Delta+1)$ with $\gamma'$. Lemma $\ref{l-cubic1}$ can now be applied to obtain a $\Delta$-colouring $\gamma$ from $\gamma'$ by $O(n)$ recolourings.  Hence, $d_{\Delta+1}(\alpha, \gamma) \leq O(n^2)$ as required.  \qed
\end{proof}

\section{The Proof of Theorem~\ref{t-main}}\label{s-main2}

First we need the following result of Matamala~\cite{matamala}. We use $\omega(G)$ to denote the number of vertices in the largest clique in~$G$. 

\begin{lemma}[\cite{matamala}]\label{thm3}
Let $G=(V,E)$ be a graph with maximum degree $\Delta \geq 3$ and $\omega(G) \leq \Delta$.   Let $p_1$ and $p_2$ be non-negative integers such that $p_1+p_2=\Delta-2$.   Then there is a partition $\{S_1,S_2\}$ of $V$  such that $S_1$ induces a maximum size $p_1$-degenerate graph in  $G$ and $S_2$ induces a $p_2$-degenerate graph.
\end{lemma}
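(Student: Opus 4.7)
The plan is to prove the lemma by an extremal argument. I would fix a subset $S_1 \subseteq V$ with $G[S_1]$ $p_1$-degenerate and $|S_1|$ as large as possible, setting $S_2 = V \setminus S_1$; such a maximum exists since $S_1 = \emptyset$ is trivially admissible. By the maximality, every vertex $v \in S_2$ has the property that $G[S_1 \cup \{v\}]$ fails to be $p_1$-degenerate. The goal is to leverage this local failure, together with the clique bound $\omega(G) \leq \Delta$, to force $G[S_2]$ to be $p_2$-degenerate; the conclusion of the lemma then follows immediately because $|S_1|$ is already maximum among $p_1$-degenerate induced subgraphs.

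Suppose for contradiction that $G[S_2]$ is not $p_2$-degenerate, and let $H$ be an induced subgraph of $G[S_2]$ with minimum degree at least $p_2 + 1$. Pick any $v \in V(H)$. Since $G[S_1]$ is $p_1$-degenerate while $G[S_1 \cup \{v\}]$ is not, every subgraph witnessing the failure of $p_1$-degeneracy in $G[S_1 \cup \{v\}]$ must contain $v$, and therefore $v$ has at least $p_1 + 1$ neighbours in $S_1$. On the other hand, $v$ has at least $p_2 + 1$ neighbours in $V(H) \subseteq S_2$. Because $\dgr_G(v) \leq \Delta = p_1 + p_2 + 2$, both inequalities are simultaneously tight: $|N(v) \cap S_1| = p_1 + 1$, $|N(v) \cap V(H)| = p_2 + 1$, and $v$ has no other neighbours in $G$.

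The next step is a swap argument. I would pick a vertex $u \in N(v) \cap S_1$ carefully (for instance, one lying in the ``core'' $(p_1{+}1)$-dense subgraph of $G[S_1 \cup \{v\}]$ that witnesses non-$p_1$-degeneracy) and try to exchange $u$ and $v$, forming $S_1' = (S_1 \setminus \{u\}) \cup \{v\}$. To make the iteration terminate, I would strengthen the extremal choice: among all $p_1$-degenerate $S_1$ of maximum size, pick one minimising a secondary measure such as the number of edges in $G[S_2]$, or the lexicographic degeneracy profile of $G[S_2]$. Each successful swap should strictly decrease this secondary measure, so the process must either produce a strictly larger $p_1$-degenerate set (contradicting maximality) or get stuck in a configuration so rigid that the $p_1 + 1$ vertices of $N(v) \cap S_1$, together with the $p_2 + 1$ vertices of $N(v) \cap V(H)$ and $v$ itself, form a clique of order $p_1 + p_2 + 3 = \Delta + 1$, contradicting $\omega(G) \leq \Delta$.

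The main obstacle is precisely executing this swap and proving termination. The local degree count pins down the neighbourhood of a single $v$, but upgrading the statement ``every vertex of $H$ has a rigid local structure'' to the stronger ``the joint neighbourhood is actually a $(\Delta+1)$-clique'' requires careful propagation: one must check that moving $u$ into $S_2$ does not accidentally restore $p_2$-degeneracy of $G[S_2]$ before the contradiction is reached, must argue that the new $S_1'$ is in fact $p_1$-degenerate (so $v$ genuinely replaces $u$ in the ``core''), and must rule out oscillation between two equally good partitions. Designing the secondary extremal measure so that it is strictly monotone under every legal swap, and verifying that the final stuck configuration is genuinely a $K_{\Delta+1}$ rather than a looser dense structure, is the heart of the combinatorial argument; the hypothesis $\Delta \geq 3$ is used here to guarantee enough room among the $\Delta$ forced neighbours of $v$ for the swap argument to have non-trivial flexibility.
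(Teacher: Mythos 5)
First, a point of orientation: the paper does not actually prove this lemma---it is imported from Matamala~\cite{matamala} as a black box---so the benchmark is Matamala's published argument, which is itself a Brooks-style extremal/exchange proof of some substance. Your opening moves are sound and match the standard strategy: taking $S_1$ of maximum size with $G[S_1]$ $p_1$-degenerate, observing that maximality forces every $v\in S_2$ to have at least $p_1+1$ neighbours in $S_1$ (correct, since a vertex with at most $p_1$ neighbours in $S_1$ could be added), and combining this with a minimum-degree-$(p_2+1)$ witness $H\subseteq G[S_2]$ to get the tight degree split $|N(v)\cap S_1|=p_1+1$, $|N(v)\cap V(H)|=p_2+1$, $\dgr(v)=\Delta$. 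But from that point on your text is a plan rather than a proof, and you say so yourself: the swap step, the termination argument, and the rigidity analysis are all deferred, and those are precisely where the theorem lives.

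Concretely, three gaps. (a) \emph{The secondary measure is not defined, and the obvious candidate fails.} Swapping $u\in N(v)\cap S_1$ with $v$ removes exactly $p_2+1$ edges inside $S_2$ but adds all edges from $u$ into $S_2\setminus\{v\}$, which can be more; there is no monotonicity for ``number of edges in $G[S_2]$'' without further structure. This is exactly why the classical proofs in this family (Catlin, Bollob\'as--Manvel, Borodin--Kostochka--Toft, Matamala) work with a carefully weighted potential rather than a raw edge count. (b) \emph{You never show $S_1'=(S_1\setminus\{u\})\cup\{v\}$ is $p_1$-degenerate.} Every subgraph of $G[S_1\cup\{v\}]$ of minimum degree at least $p_1+1$ contains $v$, but deleting one vertex $u$ of one such ``core'' does not kill cores avoiding $u$; to guarantee the swap is legal you need $u$ to lie in \emph{every} core, and establishing that a suitable $u$ exists requires analysing the structure of the critical subgraph (in Brooks-type proofs this is the Gallai-tree analysis, where blocks are complete graphs or odd cycles). (c) \emph{The stuck configuration need not be a clique.} The tight degree counts give no adjacency whatsoever between $N(v)\cap S_1$ and $N(v)\cap V(H)$, and when $p_1\le 1$ or $p_2\le 1$ the critical structures include odd-cycle-like blocks rather than cliques; upgrading ``rigid'' to ``$K_{\Delta+1}$'' is the hard structural step, and it is exactly where the hypotheses $\Delta\ge 3$ and $\omega(G)\le\Delta$ must be spent. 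So the proposal correctly identifies the right framework and the right first-order facts, but the heart of the argument---measure design, legality of swaps, termination, and the classification of the extremal configuration---is missing, and none of these is routine: this is a published theorem whose difficulty is concentrated in precisely the steps you leave open.
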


We also need the following two lemmas.

\begin{lemma}\label{l-deg}
Let $G$ be a connected $(\Delta-1)$-degenerate graph on $n$ vertices with maximum degree~$\Delta \geq 3$, and let $k \geq \Delta+1$. Let $\alpha$ be a $k$-colouring of $G$.  Then there exists a $\Delta$-colouring $\gamma$ of $G$ such that~$d_k(\alpha, \gamma)$ is $O(n^2)$.  
\end{lemma}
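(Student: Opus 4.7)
The plan is to observe that Lemma~\ref{l-deg} reduces directly to Theorem~\ref{t-delta+1} once we rule out the three exceptional situations in that theorem's statement. What needs to be checked is that, under the $(\Delta-1)$-degeneracy hypothesis, (a)~$G$ is not $K_{\Delta+1}$, (b)~$G$ is not an odd cycle $C_n$, and (c)~the colouring $\alpha$ is not frozen. Once these three points are in hand, Theorem~\ref{t-delta+1} immediately produces a $\Delta$-colouring $\gamma$ with $d_k(\alpha,\gamma) = O(n^2)$, which is exactly the conclusion we want.

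Each of the three points is straightforward. For (a), $K_{\Delta+1}$ is itself an induced subgraph in which every vertex has degree $\Delta$, so $K_{\Delta+1}$ is $\Delta$-degenerate but not $(\Delta-1)$-degenerate, and hence $G \neq K_{\Delta+1}$. For (b), any cycle has maximum degree $2$, while the hypothesis demands $\Delta \geq 3$. For (c), recall that $\alpha$ being frozen means every vertex $v$ sees $k-1$ distinct colours on its neighbours, forcing $\dgr(v) \geq k-1 \geq \Delta$ at every vertex; this would make $G$ itself $\Delta$-regular, contradicting $(\Delta-1)$-degeneracy, which guarantees some vertex of $G$ has degree at most $\Delta-1$.

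No real obstacle is expected: the analytic content of the lemma is carried entirely by Theorem~\ref{t-delta+1}, and $(\Delta-1)$-degeneracy is precisely the mild strengthening that simultaneously eliminates all three exceptional inputs. This packaging is what will be convenient when combining with Matamala's partition result (Lemma~\ref{thm3}) in the proof of Theorem~\ref{t-main}, where the two parts of the vertex partition will each be accessible to Lemma~\ref{l-deg} via their degeneracy bounds.
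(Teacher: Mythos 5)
Your proposal is correct and matches the paper's own proof, which likewise derives the lemma immediately from Theorem~\ref{t-delta+1} by observing that a $(\Delta-1)$-degenerate graph has a vertex of degree at most $\Delta-1$, so that $\alpha$ cannot be frozen and $G$ cannot be $K_{\Delta+1}$ or $C_n$. Your version merely spells out these three exclusions (including the point that a cycle is already ruled out by $\Delta\geq 3$) in slightly more detail than the paper's one-line argument.
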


\begin{proof}
The result follows immediately from Theorem~\ref{t-delta+1} by observing that a $(\Delta - 1)$-degenerate graph has a vertex with at most $\Delta - 1$ neighbours, 
so $\alpha$ is not frozen and $G$ is not $K_{\Delta+1}$ or $C_n$.
\qed
\end{proof}

\begin{lemma} \label{l-delta}
Let $G=(V,E)$ be a graph on $n$ vertices with maximum degree~$\Delta \geq~1$.  Let $\gamma_1$ and $\gamma_2$ be $\Delta$-colourings of $G$.  Then $d_{\Delta+1}(\gamma_1, \gamma_2)$ is $O(n^{2})$. 
\end{lemma}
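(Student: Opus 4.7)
The plan is to prove Lemma~\ref{l-delta} by induction on $n$, with the base case and small-$\Delta$ cases handled separately: for $\Delta\le 1$ the result is trivial (there are no edges), and for $\Delta=2$ the graph $G$ is a disjoint union of paths and even cycles, for which the $O(n^2)$ bound for $R_3$ is well known. The key structural observation underlying the inductive argument is that in any $\Delta$-colouring $\gamma$ of $G$, the neighbours of each vertex $v$ use only colours from $\{1,\ldots,\Delta\}\setminus\{\gamma(v)\}$, a set of size $\Delta-1$; by the pigeonhole principle they span at most $\Delta-1$ distinct colours, so colour $\Delta+1$ together with at least one other colour is free at $v$, and in particular no $\Delta$-colouring is frozen.

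For the inductive step I would pick a vertex $v$, say of minimum degree, and apply the induction hypothesis to $G-v$, which has maximum degree at most $\Delta$ and admits both $\gamma_1|_{G-v}$ and $\gamma_2|_{G-v}$ as $\Delta$-colourings. This yields a recolouring sequence $\sigma_0=\gamma_1|_{G-v},\ldots,\sigma_t=\gamma_2|_{G-v}$ in $R_{\Delta+1}(G-v)$ of length $t = O((n-1)^2)$. I would then lift this sequence to $R_{\Delta+1}(G)$ while tracking the colour of $v$: each step $\sigma_i\to\sigma_{i+1}$ is performed unchanged in $G$ unless it would assign $v$'s current colour to a neighbour of $v$, in which case I first recolour $v$ to some available colour before proceeding. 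A final recolouring ensures $v$ ends with colour $\gamma_2(v)$. Since each step of the $G-v$ sequence contributes at most a constant number of moves in the lifted sequence, the total length is $O(n^2)$.

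The main obstacle is to show that the auxiliary recolouring of $v$ is always possible, i.e., that $v$ is never locked during the simulated sequence. When $\deg(v)<\Delta$ this is immediate: $v$'s neighbours use at most $\Delta-1$ distinct colours, leaving at least two colours of $\{1,\ldots,\Delta+1\}$ free at $v$. The delicate case is when $G$ is $\Delta$-regular, since then no vertex has small degree, and during the lifted sequence $v$'s $\Delta$ neighbours might collectively cover all $\Delta$ colours of $\{1,\ldots,\Delta+1\}\setminus\{\gamma(v)\}$, locking $v$. My strategy here is to first apply a short local recolouring (in the spirit of Lemmas~\ref{l-cubic1} and~\ref{l-cubic2}) to break the regularity near $v$, after which the inductive argument proceeds as before.
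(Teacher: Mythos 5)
There is a genuine gap, and it sits exactly where you placed your hand-waving: the $\Delta$-regular case. Your lifting argument is sound whenever the deleted vertex $v$ has $\dgr(v)\leq\Delta-1$, since then at every intermediate $(\Delta+1)$-colouring the neighbours of $v$ occupy at most $\Delta-1$ of the $\Delta+1$ colours, so $v$ always has a spare colour distinct from its current one and each simulated step costs $O(1)$ moves; this is essentially Cereceda's simulation argument, the same idea the paper uses in its algorithmic Lemma~\ref{l-degenerate}. But when $G$ is $\Delta$-regular no such $v$ exists, and your observation that no $\Delta$-colouring is frozen does not help: the endpoints $\gamma_1,\gamma_2$ use only $\Delta$ colours, but the \emph{intermediate} colourings of the lifted sequence are genuine $(\Delta+1)$-colourings, and at any of the $O(n^2)$ simulated steps the $\Delta$ neighbours of $v$ may be rainbow-coloured on the $\Delta$ colours other than $v$'s, leaving $v$ locked precisely when you need to move it. Your proposed fix --- ``a short local recolouring to break the regularity near $v$'' --- is not meaningful as stated: regularity is a property of the graph, not of the colouring, so no recolouring can remove it. What you would actually need is to unlock $v$ each time it becomes locked, and that is not a short local operation: unlocking a locked vertex whose neighbours may themselves be locked is the entire content of the case analysis in Lemmas~\ref{l-cubic1} and~\ref{l-cubic2}, it can require recolouring sequences of length $\Theta(n)$, and --- fatally for your scheme --- those sequences recolour vertices of $G-v$, destroying the correspondence between the actual colouring and the simulated sequence $\sigma_0,\ldots,\sigma_t$, with no argument for restoring the invariant or bounding the total number of moves (even an $O(n)$-cost unlock occurring at $\Theta(n^2)$ steps would already break the quadratic bound).

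The paper avoids this obstacle by a different induction altogether: it inducts on $\Delta$, not on $n$, using Matamala's partition (Lemma~\ref{thm3}) to split $V$ into a maximum independent set $S_1$ and a set $S_2$ inducing a $(\Delta-2)$-degenerate graph $H$. Parking all of $S_1$ on the unused colour $\Delta+1$ in both $\gamma_1$ and $\gamma_2$ (at most $2n$ moves) reduces the problem to $H$, which by maximality of $S_1$ has maximum degree at most $\Delta-1$ --- so the regular case simply cannot recur --- and Lemma~\ref{l-deg} (i.e., Theorem~\ref{t-delta+1}) brings both restricted colourings down to $(\Delta-1)$-colourings before the inductive hypothesis is applied, using only colours in $\{1,\ldots,\Delta\}$ so that adjacencies with the parked set $S_1$ never interfere. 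The parked independent set is exactly the device that eliminates the configuration your proposal cannot handle; without it, or without invoking the full strength of Theorem~\ref{t-delta+1} to escape locked configurations in regular graphs, the vertex-deletion induction does not go through.
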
  

\begin{proof}
We use induction on $\Delta$.
If $\Delta \in \{1, 2\}$ the statement is trivially true.  Let $\Delta\geq 3$.
We observe that $\omega(G) \leq \Delta$ because $G$ is $\Delta$-colourable. Applying Lemma~\ref{thm3} with $p_1=0$ and $p_2=\Delta-2$, we obtain a partition $\{S_1, S_2\}$ of $V$ such that $S_1$ is a maximum independent set and $S_2$  induces a $(\Delta-2)$-degenerate graph that we denote by~$H$. 

From $\gamma_1$ and $\gamma_2$ recolour the vertices of $S_1$ with colour $\Delta+1$ (the colour that is not used in either $\gamma_1$ or $\gamma_2$). This can be done by at most $2n$ recolourings.
So now we can focus on the colourings restricted to~$S_2$, and as long as we do not use the colour $\Delta+1$ we do not need to worry about adjacencies with $S_1$.  So let $\gamma_1^H$ and $\gamma_2^H$ be the colourings of $H$ that are obtained by taking the restrictions of $\gamma_1$ and $\gamma_2$ to $S_2$, and
if we can recolour from $\gamma_1^H$ to $\gamma_2^H$ by $O(n^2)$ recolourings
without using colour $\Delta+1$, we will be done.  We note that 
\begin{itemize}
\item $\gamma_1^H$ and $\gamma_2^H$ use only colours from $\{1, 2, \ldots, \Delta\}$;
\item each component of $H$ has maximum degree at most $\Delta - 1$ (since every vertex in 	$S_2$ has at least one neighbour in $S_1$ by the maximality of $S_1$);
\item each component of $H$ is $(\Delta-2)$-degenerate.
\end{itemize}
Thus we can apply Lemma~\ref{l-deg} on each component of $H$
to recolour each of $\gamma_1^H$ and $\gamma_2^H$ to a $(\Delta-1)$-colouring
using at most $O(n^2)$ recolourings. By the inductive hypothesis, there is a path 
of length $O(n^2)$ between these two $(\Delta-1)$-colourings that includes only $\Delta$-colourings so does not use colour $\Delta+1$. Because at most $2n$ recolourings were needed to recolour $\gamma_1$ and $\gamma_2$ to $\gamma_1^H$ and $\gamma_2^H$, the total number of recolourings is $O(n^2)$. This completes the proof of Lemma~\ref{l-delta}.
\qed
\end{proof}

The lemma says that there is a path between any pair of $\Delta$-colourings, but, because we are working with $R_{\Delta+1}(G)$, the intermediate colourings might use $\Delta+1$ colours. We are now ready to prove Theorem~\ref{t-main}, which we restate below.

\medskip
\noindent
{\bf Theorem~\ref{t-main}.}
{\it Let $G$ be a connected graph on $n$ vertices  with maximum degree~$\Delta \geq 3$.    Let $\alpha$ and $\beta$ be $(\Delta+1)$-colourings of $G$.  If  $\alpha$ and $\beta$ are not frozen colourings, then $d_{\Delta+1}(\alpha, \beta)$ is $O(n^2)$.}

\begin{proof}
Theorem~\ref{t-delta+1} implies that from each of $\alpha$ and $\beta$ there is a path in $R_{\Delta+1}$ to a $\Delta$-colouring; Lemma~\ref{l-delta} implies that there is a path between these two $\Delta$-colourings that completes the path from $\alpha$ to $\beta$. 
\qed
\end{proof}

\section{The Proof of Theorem~\ref{t-third}}\label{s-thirdproof}

The \emph{degeneracy} of a graph $G$ is the least integer $k$ such that $G$ is $k$-degenerate.

We start with the following easy lemma, which is well known (see, for example,~\cite{matamala}). We give a short proof for completeness.

\begin{lemma}\label{l-only}
Let $\Delta\geq 1$.
Every connected graph with maximum degree~$\Delta$ that is not regular is $(\Delta-1)$-degenerate.
\end{lemma}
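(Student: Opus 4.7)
The plan is to verify the definition of $(\Delta-1)$-degeneracy directly: I would show that every induced subgraph $H$ of $G$ contains a vertex whose degree in $H$ is at most $\Delta-1$. The proof splits naturally into two cases depending on whether $H$ coincides with $G$ or is a strict induced subgraph, and each case uses exactly one of the two hypotheses.

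In the case $H = G$, the hypothesis that $G$ is not regular gives a vertex $v$ whose degree in $G$ is strictly less than $\Delta$; this same $v$ has $\deg_H(v) \leq \Delta-1$, as required. In the case where $H$ is a proper induced subgraph, I would invoke connectivity of $G$: since $V(H) \neq V(G)$, there must exist an edge of $G$ with one endpoint $v$ in $V(H)$ and the other endpoint $u$ in $V(G) \setminus V(H)$, for otherwise $V(H)$ and $V(G) \setminus V(H)$ would form a nontrivial separation of $G$. Such a vertex $v$ has all of its $H$-neighbours among its at most $\Delta$ neighbours in $G$, at least one of which (namely $u$) is excluded, so $\deg_H(v) \leq \Delta - 1$.

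There is no real obstacle here; the only mild subtlety is that the two hypotheses play distinct roles: non-regularity is needed only for the case $H = G$, while connectivity is needed only for proper induced subgraphs. Note also that the statement would fail without connectivity, since a disjoint union of a $\Delta$-regular component and any other component would have maximum degree $\Delta$ and be non-regular but not $(\Delta-1)$-degenerate, as the $\Delta$-regular component is an induced subgraph with minimum degree $\Delta$.
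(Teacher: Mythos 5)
Your proof is correct, and it takes a genuinely different route from the paper's. You verify the definition of $(\Delta-1)$-degeneracy directly, splitting on whether the induced subgraph $H$ equals $G$ (where non-regularity supplies a vertex of minimum degree at most $\Delta-1$) or is a proper nonempty induced subgraph (where connectivity of $G$ supplies an edge leaving $V(H)$, so its endpoint in $H$ has $H$-degree at most $\Delta-1$). The paper instead argues via a smallest counterexample: it picks a vertex $v$ with $\dgr(v)<\Delta$, shows $G-v$ cannot have degeneracy $\Delta$ (else minimality would force $G-v$ to be $\Delta$-regular, making the neighbours of $v$ have degree exceeding $\Delta$ in $G$), concludes $G-v$ is $(\Delta-1)$-degenerate, and derives a contradiction by noting every induced subgraph of $G$ either lies in $G-v$ or contains $v$. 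Your direct argument is shorter and makes transparent exactly which hypothesis is used where; it also sidesteps a delicate point the paper leaves implicit, namely that $G-v$ need not be connected, so applying minimality of a \emph{connected} counterexample to $G-v$ really requires a component-wise argument (using that every component of $G-v$ contains a neighbour of $v$). One small quibble with your closing remark on the necessity of connectivity: a disjoint union of a $\Delta$-regular graph with \emph{any} other component is not always non-regular (two $\Delta$-regular components give a regular graph); take the second component to be, say, a single vertex and the counterexample works as you intend.
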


\begin{proof} 
Let $G$ be a smallest possible counterexample, so $G$ has degeneracy and maximum degree equal to $\Delta$ and contains a vertex $v$ with  $\dgr(v) < \Delta$.  If $G-v$ has degeneracy $\Delta$ then, by the minimality of $G$, we find that $G-v$ is $\Delta$-regular. This means that in $G$ every neighbor of $v$ has more than $\Delta$ neighbours, which is not possible. Hence, $G-v$ has degeneracy $\Delta-1$. Every induced subgraph $G'$ of $G$ is either an induced subgraph of $G-v$ or contains $v$.
Hence, $G'$ has a vertex of degree less than $\Delta$ contradicting the claim that $G$ has degeneracy~$\Delta$.\qed 
\end{proof}

Lemma~\ref{l-only} 
tells us that Theorem~\ref{t-third} is a statement about
$(\Delta-1)$-degenerate graphs. 

We introduce some additional definitions.
We let $G[S]$ denote the subgraph of a graph $G=(V,E)$ induced by some set $S\subseteq V$.
It is well-known that $G$ is $p$-degenerate for some integer~$p$ if and only if there exists a \emph{degeneracy ordering} $v_1, v_2, \dots, v_n$ of its vertices such that $v_i$ has at most $p$ neighbours $v_j$ with $j < i$.  One can compute such an ordering
in  $O(n^2)$ time (let $v_n$ be a vertex of minimum degree in~$G$ and, for $i=n-1,\ldots, 1$, let $v_i$ be a vertex of minimum degree in $G[V\setminus \{v_{i+1},\ldots,v_n\}]$).

We need an algorithmic version of a result of Mih\'ok~\cite{mihok}, which was proven independently by Wood~\cite{wood}. 
We present a slightly modified version of the proof of Wood which 
was implicitly algorithmic (it suffices to make a few additional algorithmic observations).
  
\begin{lemma}[\cite{mihok,wood}]\label{degpartition}
Let $r\geq 1$ and $k\geq r-1$.
Let $G = (V, E)$ be a $k$-degenerate graph on $n$ vertices. Let $p_1, \dots, p_r$ be non-negative integers so 
that $\sum_{t=1}^r{p_t} = k - r + 1$.
Then it is possible to compute in $O(n^2)$ time a partition $\{V_1, \dots, V_r\}$ of $V$ such that $G[V_t]$ is $p_t$-degenerate
for $t = 1, \dots, r$.
\end{lemma}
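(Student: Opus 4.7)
The plan is to prove Lemma~\ref{degpartition} by the standard greedy approach via a degeneracy ordering. First I would compute, in $O(n^2)$ time, a degeneracy ordering $v_1,v_2,\ldots,v_n$ of $G$ using the peeling procedure described in the excerpt, so that each $v_i$ has at most $k$ neighbours among $v_1,\ldots,v_{i-1}$. I would then process the vertices in the order $v_1,v_2,\ldots,v_n$, maintaining sets $V_1,\ldots,V_r$ (initially empty) and at step $i$ placing $v_i$ into some $V_t$ for which
\[
\bigl|N(v_i)\cap V_t\cap\{v_1,\ldots,v_{i-1}\}\bigr|\;\leq\;p_t.
\]

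The heart of the argument is a pigeonhole: since $\sum_{t=1}^{r}(p_t+1)=(k-r+1)+r=k+1$, if every $V_t$ already contained at least $p_t+1$ earlier neighbours of $v_i$ then $v_i$ would have at least $k+1$ earlier neighbours, contradicting the defining property of a degeneracy ordering. Hence a suitable $t$ always exists, and the assignment is well-defined. Having placed every vertex, the restriction of $v_1,\ldots,v_n$ to $V_t$ is by construction an ordering of $V_t$ in which each element has at most $p_t$ earlier neighbours within $V_t$; this is precisely a witness that $G[V_t]$ is $p_t$-degenerate.

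For the running time, the degeneracy ordering phase costs $O(n^2)$. In the greedy phase, processing $v_i$ requires scanning its earlier neighbours to count, for each $t$, how many already lie in $V_t$; this takes time proportional to $1+|N(v_i)\cap\{v_1,\ldots,v_{i-1}\}|$, which is $O(k)$. Summed over all $i$, the greedy phase takes $O(kn)$ time, which is $O(n^2)$ since $k\leq n-1$. Thus the total complexity is $O(n^2)$, as required.

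The only mildly delicate point is the pigeonhole inequality and confirming that $\sum_t(p_t+1)=k+1$ exactly exceeds the back-degree bound $k$ by one, which is what makes the greedy placement always succeed; the rest is bookkeeping. The equivalence between $p$-degeneracy and the existence of an ordering with back-degree at most $p$ is standard, so no further obstacle arises in verifying that each $G[V_t]$ has the claimed degeneracy.
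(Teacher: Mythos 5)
Your proposal is correct and is essentially the paper's own argument: both compute a degeneracy ordering in $O(n^2)$ time and then greedily place each $v_i$ into a class $V_t$ containing at most $p_t$ of its earlier neighbours, justified by the pigeonhole identity $\sum_{t=1}^{r}(p_t+1)=k+1$ exceeding the back-degree bound $k$ (the paper phrases this as maintaining an invariant over the prefixes $X_i=\{v_1,\dots,v_i\}$, which is the same loop). Your explicit observation that the restriction of the ordering to each $V_t$ witnesses $p_t$-degeneracy, and the $O(kn)\subseteq O(n^2)$ accounting for the greedy phase, match the paper's reasoning.
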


\begin{proof}
We first compute a degeneracy ordering $v_1,  \dots, v_n$ of $G$ in $O(n^2)$ time.
For $i=1,\ldots,n$, we define $X_i = \{v_1, \dots, v_i\}$. Then, by definition, $v_i$ has at most $k$ neighbours in 
$X_{i-1}$. It suffices to prove the invariant that we can compute in $O(n)$ time 
a partition $\{Y_1, \dots, Y_r\}$ of $X_i$, where $G[Y_{s}]$ is $p_s$-degenerate for $s = 1, \dots, r$. 
If $i = 1$ the invariant trivially holds. Let $i\geq 2$.
By our invariant, we can compute in $O(n)$ time a partition $\{Z_1, \dots, Z_r\}$ of $X_{i-1}$ where $G[Z_s]$ is $p_s$-degenerate
for $s = 1, \dots, r$.  If $v_{i}$ has more than $p_s$ neighbours in every $G[Z_{s}]$ then $v_i$ has at 
least $\sum_{i=1}^r(p_i+1) = k + 1$ neighbours in 
$X_{i-1}$, a contradiction. Hence,
 $v_i$ has at most $p_q$ neighbours in at least one set $Z_q$, which we can find in $O(n)$ time.
We put $v_i$ into $Z_q$ to get the desired partition for $X_i$ in $O(n)$ time.  \qed
\end{proof}

Recall that Cereceda~\cite{luisthesis} proved that for any $k \geq 2d + 1$ 
the reconfiguration graph $R_k(G)$ of every $d$-degenerate graph $G$ on $n$ vertices has diameter $O(n^2)$. We adapt his proof to show the following lemma.

\begin{lemma}\label{l-degenerate}
Let $G$
be a graph on $n$ vertices with maximum degree $\Delta \geq 1$ and degeneracy $\Delta - 1$.  Let $\alpha$ be a $(\Delta+1)$-colouring of~$G$. It is possible to compute a $\Delta$-colouring $\gamma$ of $G$ in time $O(n^2)$ such that
\mbox{$d_{\Delta+1}(\alpha, \gamma) \leq n^2$}. 
\end{lemma}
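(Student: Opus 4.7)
My plan is to adapt Cereceda's argument~\cite{luisthesis} for $d$-degenerate graphs with $k \geq 2d+1$ colours to the tighter setting here, where $k = \Delta+1$ and the degeneracy is $d = \Delta-1$. The extra structure that compensates for the fact that $k < 2d+1$ (when $\Delta \geq 3$) comes from the bounded maximum degree together with the existence of a vertex of degree at most $\Delta-1$ guaranteed by the degeneracy hypothesis.

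First, I would compute a degeneracy ordering $v_1,\ldots,v_n$ of $G$ in $O(n^2)$ time, so that each $v_i$ has at most $\Delta-1$ neighbours among $v_1,\ldots,v_{i-1}$. Using this ordering, I would then produce a target $\Delta$-colouring $\gamma$ in $O(n^2)$ time by greedy forward assignment: set $\gamma(v_i)$ to be the smallest colour in $\{1,\ldots,\Delta\}$ not appearing on any already-coloured earlier neighbour of $v_i$. This is well defined because $v_i$ has at most $\Delta-1$ earlier neighbours.

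Next, I would construct a recolouring sequence from $\alpha$ to $\gamma$ by processing the vertices in reverse degeneracy order $v_n, v_{n-1}, \ldots, v_1$, maintaining the invariant that before step~$i$ every $v_j$ with $j > i$ is already coloured $\gamma(v_j)$. At step~$i$, the only neighbours of $v_i$ that can block a direct recolouring to $\gamma(v_i)$ are earlier neighbours currently coloured $\gamma(v_i)$, since the later neighbours are safe by $\gamma(v_j) \neq \gamma(v_i)$. For each such blocking earlier neighbour $v_k$, I would first recolour $v_k$ to a free colour in $\{1,\ldots,\Delta+1\}\setminus\{\gamma(v_i)\}$; this colour exists immediately unless $v_k$ has degree exactly $\Delta$ and its $\Delta$ neighbours already use precisely those $\Delta$ colours---the ``stuck'' configuration.

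The main obstacle is resolving stuck configurations. The key observation is that the degeneracy-$(\Delta-1)$ hypothesis implies $G$ contains a vertex of degree at most $\Delta-1$, which can never be stuck. When a chain of stuck vertices is encountered while processing $v_i$, I would propagate a cascade of temporary recolourings from stuck vertex to stuck vertex along a path in $G$ until an unstuck vertex is reached; once the cascade terminates, the freed colour travels back along the chain, letting $v_k$ and finally $v_i$ be recoloured. Since any single cascade visits each vertex at most once, it contributes at most $n$ recolourings per step, for a grand total of at most $n^2$ recolourings. Maintaining, for each vertex, the multiset of colours appearing on its neighbours allows each recolouring and each search for the next cascade vertex to be carried out in $O(1)$ amortised time, so the overall running time is $O(n^2)$, as required.
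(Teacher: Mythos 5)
There is a genuine gap, and it is structural rather than cosmetic. Your plan fixes a greedy target colouring $\gamma$ in advance and processes vertices in reverse degeneracy order under the invariant that every $v_j$ with $j>i$ is frozen at $\gamma(v_j)$; but your own stuck-vertex cascades are incompatible with that invariant. A stuck vertex has degree $\Delta$ with all $\Delta+1$ colours on its closed neighbourhood, and since the degeneracy ordering gives it at most $\Delta-1$ earlier neighbours, the only \emph{guaranteed} escape from a stuck vertex is through a neighbour that is \emph{later} in the ordering. A cascade confined to earlier (unfrozen) vertices has no termination guarantee: the vertex of degree at most $\Delta-1$ whose existence you invoke lives somewhere in $G$, but nothing ensures your path of stuck vertices reaches it, that an arbitrarily chosen path visits each vertex at most once, or what to do when every unvisited neighbour of the current stuck vertex is itself stuck. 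So cascades must be allowed to recolour later vertices, which destroys your invariant, and with only $\Delta+1$ colours there is no slack to restore it. The paper's proof is designed around exactly this obstruction: it prescribes no target at all. Each pass only promises that $v_1,\dots,v_{h-1}$ are untouched (where $v_h$ is the leftmost vertex coloured $\Delta+1$), that each later vertex is recoloured at most once, and that $v_h$ loses colour $\Delta+1$; later vertices may be scrambled arbitrarily. The progress measure is that the leftmost occurrence of colour $\Delta+1$ moves strictly rightward, so at most $n$ passes of at most $n$ recolourings suffice, and $\gamma$ is simply whatever colouring results. The key cascade rule you are missing is: from a stuck vertex $w_j$, step to the neighbour $w_{j+1}$ that is \emph{latest} in the degeneracy ordering and set $c_j=\alpha(w_{j+1})$. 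Stuckness forces all neighbours of $w_j$ to be distinctly coloured and a later neighbour to exist, so the cascade is strictly increasing along the ordering (hence terminates, each vertex visited once), and when the colours travel back, recolouring $w_j$ to $c_j$ is proper because $w_{j+1}$ was the \emph{unique} neighbour of $w_j$ in that colour and was just recoloured, while no other already-recoloured vertex can be adjacent to $w_j$ (all of them are later than every neighbour of $w_j$).

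Even setting the invariant problem aside, your quantitative claims do not follow. A single step $i$ can have up to $\Delta-1$ blocking earlier neighbours, each potentially spawning its own cascade, and a cascade can recolour a vertex adjacent to $v_i$ \emph{to} the colour $\gamma(v_i)$, re-blocking a neighbour you had already cleared; so ``one cascade per step, hence at most $n$ recolourings per step'' is unsupported, and the stated bound $d_{\Delta+1}(\alpha,\gamma)\le n^2$ does not follow from your accounting (temporary recolourings that are later undone must also be counted). Finally, updating the multiset of colours on the neighbourhood of a recoloured vertex costs $\Theta(\dgr)$ per recolouring, not $O(1)$ amortised; the paper gets its running time from a one-off $O(n^2)$ preprocessing (degeneracy ordering, latest neighbour of each vertex, missing colours at each vertex) plus the fact that each pass touches each vertex $O(1)$ times, not from constant-time neighbourhood maintenance.
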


\begin{proof}
We first compute a degeneracy ordering $v_1,  \dots, v_n$ of $G$ in $O(n^2)$ time.  Also in $O(n^2)$ time we record, for each vertex $v$, the neighbour of $v$ that is latest in the ordering, and the set of colours that are not used on neighbours of~$v$.

Let $h$ be the lowest index such that $\alpha(v_h) = \Delta + 1$.  We will describe an algorithm that finds in time $O(n)$ a sequence of recolourings such that 
\begin{itemize}
\item for $i<h$, $v_i$ is not recoloured,
\item for $i\geq h$, $v_i$ is recoloured at most once, and
\item $v_h$ is recoloured with a colour other than $\Delta+1$.
\end{itemize}
By repeatedly using such sequences, we can obtain a colouring $\gamma$ in which colour $\Delta+1$ is not used. At most $n$ such sequences are needed, so each vertex is recoloured at most $n$ times and the lemma follows.

We must describe the algorithm.  First we find a sequence $S$ of pairs of vertices and colours $(w_j, c_j)$ as follows: 
\begin{itemize}
\item the first vertex $w_1$ is $v_h$;
\item for each vertex $w_j$, if there is a colour that is not used on it or any of its neighbours then this is $c_j$, and $(w_j, c_j)$ is the final pair in $S$;
\item otherwise let $w_{j+1}$ be the neighbour of $w_j$ that is latest in the degeneracy ordering and let $c_j=\alpha(w_{j+1})$.
\end{itemize}

\noindent If all $\Delta+1$ colours  appear on $w_j$ and its neighbours, then $w_j$ must have degree~$\Delta$, each neighbour of $w_j$ must have a distinct colour, and, as at most $\Delta-1$ neighbours can be earlier in the degeneracy ordering, at least one neighbour is later in the ordering.  Thus each vertex in $S$ is later in the  degeneracy ordering than its predecessor and so the algorithm will terminate and $S$ is finite. Moreover, this also implies that each vertex in $v_{h+1},\ldots,v_n$ is considered at most once during the computation of $S$ and so, as the information required about each vertex was found during our preliminary computations, we can find $S$ in $O(n)$ time.

Let $s$ denote the number of pairs in $S$. We can recolour the vertices of $S$ in time $O(n)$ by simply recolouring $w_j$ with~$c_j$, starting with $w_s$ and working backwards through $S$.  Each colouring obtained is proper since $w_s$ has no neighbour coloured $c_s$ and when a vertex~$v_j$, $j<s$ is recoloured, its unique neighbour $w_{j+1}$ coloured $c_j$ has just been recoloured and it is not adjacent to any other vertex that has been recoloured since they are later in the degeneracy ordering than any of its neighbours.  Finally note that $w_1=v_h$ has been recoloured with a colour other than $\alpha(v_h) = \Delta+1$, so the recolouring sequence achieves its aim. This completes the proof.
 \qed 
 \end{proof}
  
Finally we need an algorithmic version of Lemma~\ref{l-delta} for the special case of $(\Delta-1)$-degenerate graphs; to prove it we follow the line of the proof
of Lemma~\ref{l-delta} but need Lemma~\ref{degpartition} instead of Lemma~\ref{thm3} and Lemma~\ref{l-degenerate} instead of Lemma~\ref{l-deg}.
The question whether there exists an algorithmic version for the remaining case of $\Delta$-regular graphs is still open;
note that one cannot replace Lemma~\ref{degpartition} by Lemma~\ref{thm3} in the proof of Lemma~\ref{l-delta2}, as that would require solving the \NP-compete problem of finding a maximum independent set in graphs of bounded maximum degree in polynomial time. 

\begin{lemma} \label{l-delta2}
Let $G=(V,E)$ be a $(\Delta-1)$-degenerate graph on $n$ vertices with maximum degree $\Delta \geq 1$.  
It is possible to find in $O(n^2)$ time a path between any two given $\Delta$-colourings $\gamma_1$ and $\gamma_2$ in $R_{\Delta+1}(G)$. 
\end{lemma}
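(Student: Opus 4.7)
My plan is to mirror the proof of Lemma~\ref{l-delta}, proceeding by induction on $\Delta$, substituting the algorithmic versions of the ingredients it uses: Lemma~\ref{degpartition} in place of Matamala's Lemma~\ref{thm3}, and Lemma~\ref{l-degenerate} in place of Lemma~\ref{l-deg}. The hypothesis that $G$ is $(\Delta-1)$-degenerate is exactly what makes Lemma~\ref{degpartition} applicable. The base cases $\Delta\in\{1,2\}$ are handled by the polynomial-time algorithms discussed in Section~\ref{s-future}, in particular the $O(n+m)$ algorithm of Johnson et al.~\cite{JKKPP14} for $k=3$.

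For the inductive step $\Delta\geq 3$, I first apply Lemma~\ref{degpartition} with $r=2$, $k=\Delta-1$, $p_1=0$, and $p_2=\Delta-2$ (so that $p_1+p_2=k-r+1$) to compute in $O(n^2)$ time a partition $\{S_1,S_2\}$ of $V$ such that $S_1$ is an independent set and $G[S_2]$ is $(\Delta-2)$-degenerate. I then greedily extend $S_1$ to a maximal independent set in $O(n^2)$ time; this preserves both properties and forces each component of $G[S_2]$ to have maximum degree at most $\Delta-1$, since every vertex of $S_2$ now has a neighbour in $S_1$. Starting from each of $\gamma_1$ and $\gamma_2$, I recolour every vertex of $S_1$ with colour $\Delta+1$ (valid because $S_1$ is independent and $\gamma_1,\gamma_2$ avoid colour $\Delta+1$), obtaining $(\Delta+1)$-colourings $\gamma_1'$ and $\gamma_2'$ that agree on $S_1$ after $O(n)$ recolourings per side.

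It then remains to connect $\gamma_1'$ and $\gamma_2'$ using recolourings of $S_2$ only, all avoiding colour $\Delta+1$ (since every vertex of $S_2$ has a neighbour coloured $\Delta+1$). I handle each component $C$ of $G[S_2]$ separately. If $C$ has maximum degree exactly $\Delta-1$, then its $(\Delta-2)$-degeneracy matches the hypothesis of Lemma~\ref{l-degenerate} with ``$\Delta$'' replaced by $\Delta-1$: I first apply that lemma to reduce the restrictions of $\gamma_1'$ and $\gamma_2'$ to $(\Delta-1)$-colourings of $C$ in $O(|C|^2)$ time, and then apply the inductive hypothesis to $C$ (with parameter $\Delta-1$) to obtain an $O(|C|^2)$-length path between the two reduced colourings. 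If instead $C$ has maximum degree $\Delta'\leq\Delta-2$, then $\Delta\geq\Delta'+2$, and Cereceda's~\cite{luisthesis} $O(n^2)$-time algorithm for the case $k\geq\Delta'+2$ (mentioned in Section~\ref{s-future}) directly constructs the required path. Concatenating these component-wise paths with the $S_1$-recolourings yields an $O(n^2)$-length path from $\gamma_1$ to $\gamma_2$, computable in $O(n^2)$ total time.

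The main obstacle is that Lemma~\ref{degpartition} only guarantees that $S_1$ is independent, not \emph{maximum}, whereas the proof of Lemma~\ref{l-delta} relies on the maximality of $S_1$ to bound the maximum degree in $G[S_2]$; as the paper already notes, finding a maximum independent set in graphs of bounded maximum degree is \NP-complete, so Lemma~\ref{thm3} cannot be invoked algorithmically. The greedy maximality step above addresses this. A secondary subtlety is that components of $G[S_2]$ with maximum degree strictly below $\Delta-1$ may fail to be $(\Delta'-1)$-degenerate (for instance if they are $\Delta'$-regular), so the inductive hypothesis cannot be invoked directly on them; this is resolved by falling back on Cereceda's polynomial-time algorithm in the regime $k\geq\Delta'+2$.
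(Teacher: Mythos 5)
Your proposal is correct and takes essentially the same route as the paper: induction on $\Delta$, Lemma~\ref{degpartition} with $p_1=0$ and $p_2=\Delta-2$, a greedy extension of $S_1$ to a maximal independent set, recolouring $S_1$ with colour $\Delta+1$, and then Lemma~\ref{l-degenerate} followed by the inductive hypothesis on the graph induced by $S_2$. Your only deviation --- treating each component of $G[S_2]$ separately and falling back on Cereceda's algorithm for components of maximum degree at most $\Delta-2$ --- is a patch the paper does not need, since it applies Lemma~\ref{l-degenerate} and the induction to $H$ as a whole, implicitly reading the maximum-degree and degeneracy hypotheses as upper bounds (under which both proofs go through verbatim).
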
  

\begin{proof}
We use induction on $\Delta$.
If $\Delta \in \{1, 2\}$ the statement is trivially true.  Let $\Delta\geq 3$ and assume that we have an $O(n^2)$-time algorithm for 
connected $(\Delta-2)$-degenerate graphs on $n$ vertices with maximum degree $\Delta-1$.

Applying Lemma~\ref{degpartition} with $p_1 = 0$ and $p_2 = \Delta - 2$
gives us in $O(n^2)$ time a partition $\{S_1, S_2\}$ of $V$ such that $S_1$ is an independent set and $S_2$  induces a $(\Delta-2)$-degenerate graph
that we denote by $H$.
We modify the pair $(S_1,S_2)$ in $O(n^2)$ time by moving vertices from $S_2$ to $S_1$ until $S_1$ is a maximal independent set.
Let $\gamma_1^H$ and $\gamma_2^H$ be the colourings of $H$ that are the restrictions of $\gamma_1$ and $\gamma_2$ to $S_2$.
We note that 
\begin{itemize}
\item $\gamma_1^H$ and $\gamma_2^H$ use only colours from $\{1, 2, \ldots, \Delta\}$;
\item $H$ has maximum degree at most $\Delta - 1$ (by the maximality of $S_1$);
\item $H$ is $(\Delta-2)$-degenerate.
\end{itemize}
Thus we can apply Lemma~\ref{l-degenerate} to recolour each of $\gamma_1^H$ and $\gamma_2^H$ to a $(\Delta-1)$-colouring
in $O(n^2)$ time.
We then apply the induction hypothesis to find in $O(n^2)$ time  
a path between these two $(\Delta-1)$-colourings that includes only $\Delta$-colourings.
Hence the total running time is $O(n^2)$, as required.\qed
\end{proof}

We are now ready to prove Theorem~\ref{t-third}, which we first restate.

\medskip
\noindent
{\bf Theorem~\ref{t-third}.}
{\it Let $G$ be a connected graph on $n$ vertices with maximum degree $\Delta\geq 3$. Let $k = \Delta+1$. If $G$ is not regular, then 
it is possible to find in $O(n^2)$ time a path between any two given $k$-colourings $\alpha$ and 
$\beta$ in $R_k(G)$.
}

\begin{proof}
By Lemma~\ref{l-only} we find that $G$ is $(\Delta-1)$-degenerate.
By Lemma~\ref{l-degenerate} we can find in $O(n^2)$ time a path from $\alpha$ to some $\Delta$-colouring $\gamma_1$ and a path from $\beta$ to some $\Delta$-colouring $\gamma_2$. Applying Lemma~\ref{l-delta2} completes the proof.\qed
\end{proof}

\bibliography{bibliography}{}
\bibliographystyle{abbrv}

\end{document}